\documentclass[journal]{IEEEtran}

\usepackage{extramath}
\usepackage{amsthm}
\newtheorem{definition}{Definition}

\newtheorem{lemma}{Lemma}

\usepackage[font=footnotesize,labelsep=period]{caption}
\usepackage[font=footnotesize,labelformat=parens,labelsep=space]{subcaption}
\usepackage{pgfplots}
\pgfplotsset{compat=1.16}
\usepgfplotslibrary{groupplots}
\pgfplotsset{table/search path={figures}}

\usepackage{url}
\usepackage{cite}

\begin{document}
\title{Sparse Approximation via Polynomial Equations}

\author{Matija~Tomi\'c,~
        Raphaël~Widdershoven,~
        and~Lieven~De~Lathauwer~
\thanks{{M. Tomi\'c was with the Department of Electrical Engineering (ESAT), KU Leuven, Leuven, Belgium. He is now with the Department of Mathematics, University of British Columbia, Vancouver, BC, Canada.}}%
\thanks{{R. Widdershoven was with the Department of Electrical Engineering (ESAT), KU Leuven, Leuven, Belgium.}}%
\thanks{{L. De Lathauwer is with the Department of Electrical Engineering (ESAT), KU Leuven, Leuven, Belgium, and also with Leuven.AI -- KU Leuven institute for AI, B-3000 Leuven, Belgium.}}
}

\maketitle

\begin{abstract}
We consider the problem of finding sparse solutions of an underdetermined linear system $\mathbf{Ax}=\mathbf{b}$. In contrast to conventional approaches based on greedy algorithms or convex relaxation, we reformulate sparse approximation as a structured system of polynomial equations and connect with the literature on tensor methods. We develop an eigenvalue decomposition based method that formally guarantees recovery of all sparse solutions if there is more than one. We also develop two optimization-based methods achieving favorable computational complexity. The new methods allow explicit control of the target sparsity. Numerical experiments illustrate the performance and compare to basis pursuit (denoising) and orthogonal matching pursuit.
\end{abstract}

\begin{IEEEkeywords}
sparse approximation, sparse recovery, tensor decompositions, multilinear algebra, polynomial systems
\end{IEEEkeywords}

\section{Introduction}

\IEEEPARstart{M}{any} fundamental problems in signal processing and applied mathematics can be formulated as sparse approximation problems. The wide applicability of this framework is demonstrated across diverse domains, such as compressed sensing \cite{donoho}, statistics \cite{lasso}, dictionary learning \cite{dict_learning, ksvd}, image processing \cite{elad, denoising, super_resolution, face_recog}, and data-driven discovery of governing equations \cite{sindy}. 

From the viewpoint of signal processing, the goal of sparse approximation is to represent a measured signal as a linear combination of a few elementary signals. Sparse approximation aims to find sparse solutions \( \mathbf{x} \) to possibly underdetermined systems \( \mathbf{A} \mathbf{x} = \mathbf{b} \), where \( \mathbf{A} \in \mathbb{R}^{m \times n} \) with \( m < n \). The sparsity level \( s \) denotes the number of nonzero entries in \( \mathbf{x} \).

A brute-force search over all \( \binom{n}{s} \) subsets of columns in \( \mathbf{A} \) is computationally intractable for large \( n \), and the problem is NP-hard \cite{sparse_np_hard1, sparse_np_hard2}. Practical algorithms, such as greedy methods (e.g., Orthogonal Matching Pursuit (OMP) \cite{omp1, omp2, omp1950}) and convex relaxations (e.g., Basis Pursuit (BP), Basis Pursuit Denoising (BPDN), \cite{basis_pursuit}), offer approximate solutions with theoretical guarantees under specific conditions; see \cite{elad, FoucartRauhut2013} and references therein.

{In this paper, we present an alternative approach based on solving systems of polynomial equations. We first formulate sparse approximation as a polynomial system (Section~\ref{sec:connection_sparse_poly}) and explain how tensors represent and solve such systems (Section~\ref{sec:tensors}). The sparse solutions appear as the roots of a polynomial system, all of which {can be} recovered simultaneously by the Macaulay construction (Section~\ref{sec:polyeq}). Additionally, we propose two computationally tractable {optimization-based} alternatives, SESP-D and SESP-P, and analyze their complexity (Section~\ref{sec:ls_cp_based}). Finally, we {experimentally verify their performance and }compare them to BP/BPDN and OMP~(Section~\ref{sec:comparison}).}

\label{sec:intro}
{\emph{Notation: }}
A tensor generalizes scalars, vectors, and matrices to higher orders \cite{kolda_overview,nikos_overview}. Scalars are denoted by lowercase letters $(a)$, vectors by bold lowercase letters $(\mathbf{a})$, matrices by bold uppercase letters $(\mathbf{A})$, and tensors by calligraphic letters $(\mathcal{A})$. Matrix transpose, inverse, and pseudoinverse are written as $\cdot^\top$, $\cdot^{-1}$, and $\cdot^{\dagger}$, respectively. 
The $i$-th column of $\mathbf{A}$ is denoted by $\mathbf{a}_i$, thus $\mathbf{A} = [\mathbf{a}_1~\mathbf{a}_2~\cdots]$. A diagonal matrix with $\mathbf{a}$ on the diagonal is written as $\text{diag}(\mathbf{a})$.
The $\text{vec}(\cdot)$ operator vectorizes tensors. The tensor entry $a_{i_1,i_2,\ldots,i_N}$ of $\mathcal{A} \in \R^{I_1 \times \cdots \times I_N}$ corresponds to the entry of $\text{vec}(\mathcal{A})$ with index $(i_N-1)I_{N-1}\cdots I_1 + (i_{N-1} -1) I_{N-2} \cdots I_1 + \ldots + (i_2-1)I_1 + i_1$. Vectors and matrices of all zeros and all ones are denoted by $\mathbf{0}$ and $\mathbf{1}$, respectively, with shape inferred from context. The outer and Kronecker products are denoted by $\op$ and $\kron$, respectively. The outer product of \( n \) vectors \( \mathbf{a}^{(1)}, \mathbf{a}^{(2)}, \ldots, \mathbf{a}^{(n)} \) is the \( n \)-th order tensor $\mathcal{T} = \mathbf{a}^{(1)} \op \mathbf{a}^{(2)} \op \cdots \op \mathbf{a}^{(n)},$ whose entries are defined as ${t}_{i_1 i_2 \ldots i_n} = a^{(1)}_{i_1} a^{(2)}_{i_2} \cdots a^{(n)}_{i_n}.$ The Kronecker product of $\mathbf{A} \in \R^{I \times K}$ and $\mathbf{B} \in \R^{J \times L}$ is defined as
\[
\mathbf{A} \kron \mathbf{B} \triangleq 
\begin{bmatrix}
    a_{1,1}\mathbf{B} & \cdots & a_{1,K}\mathbf{B} \\ \vdots & & \vdots \\
    a_{I,1}\mathbf{B} & \cdots & a_{I,K}\mathbf{B}
\end{bmatrix} \in \R^{IJ \times KL}. 
\]

\section{Sparse Approximation and Polynomials}
\label{sec:connection_sparse_poly}
An $s$-sparse vector has the property that any product of $s+1$ distinct entries is zero. For example, if $\mathbf{x}$ has at most $s = 2$ nonzeros, then $x_i x_j x_k = 0$ for all distinct $i, j, k$. There are $\binom{n}{s+1}$ such monomial equations corresponding to all combinations of $s+1$ indices. Sparse approximation now amounts to solving $\mathbf{Ax} = \mathbf{b}$ together with the sparsity-inducing monomial equations.

Before developing methods, we consider the geometric interpretation {for a toy example}. Consider
\[
\underbrace{\begin{bmatrix}
    1 & -1 & 1 \\ 0 & 1 & -1
\end{bmatrix}}_{\mathbf{A}} 
\underbrace{\begin{bmatrix}
    x_1 \\ x_2 \\ x_3
\end{bmatrix}}_{\mathbf{x}} = 
\underbrace{\begin{bmatrix}
    1 \\ -1
\end{bmatrix}}_{\mathbf{b}}
\]
with two 1-sparse solutions $\hat{\mathbf{x}}_{1} = [0~-1~0]^\top$ and $\hat{\mathbf{x}}_{2} = [0~0~1]^\top$. The sparsity condition $s=1$ gives three equations: $x_1 x_2 = 0$, $x_2 x_3 = 0$, and $x_1 x_3 = 0$, overlapped in blue in Fig.~\ref{fig:intersection}. Their intersection, shown in red, characterizes vectors with at most one nonzero entry. More generally, for any $n$ and sparsity level $s$, the solution set of the sparsity-inducing equations is a union of hyperplanes through the origin. This set contains all vectors $\mathbf{x}$ with sparsity at most $s$. Prior knowledge of the solution's sparsity is not strictly required, as this approach is also able to recover solutions that are sparser than the specified level $s$.

In our toy example, the sparsity-inducing equations define a set of vectors $\mathbf{x}$ with at most one nonzero entry, thus, the solution must lie on one of the red lines in Fig.~\ref{fig:sparse_via_poly}. The solution must also satisfy the linear constraint $\mathbf{A} \mathbf{x} = \mathbf{b}$, which geometrically means finding the intersection points of the red lines with the green line representing $\mathbf{A} \mathbf{x} = \mathbf{b}$, as shown in Fig.~\ref{fig:l0}. The intersection points are the sparse solutions. This already highlights a key advantage of the polynomial formulation: it encodes \textit{all} sparse solutions, i.e., the sparsest solution need not be unique.
\begin{figure}[!t]
  \centering
  \begin{subfigure}[t]{0.4\linewidth}
    \centering
    \includegraphics[width=\linewidth]{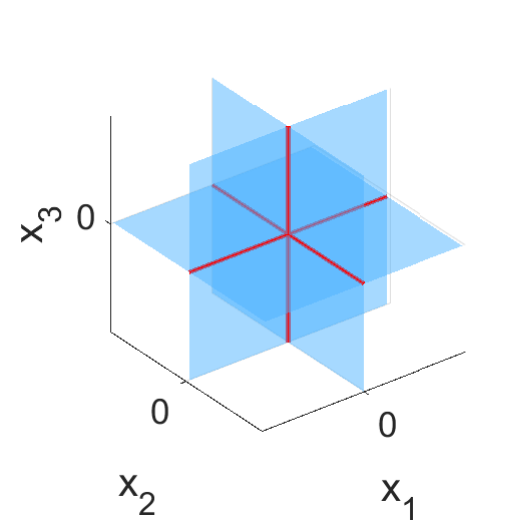}
    \caption{Sparsity-inducing equations and their intersection.}
    \label{fig:intersection}
  \end{subfigure}
  ~
  \begin{subfigure}[t]{0.4\linewidth}
    \centering
    \includegraphics[width=\linewidth]{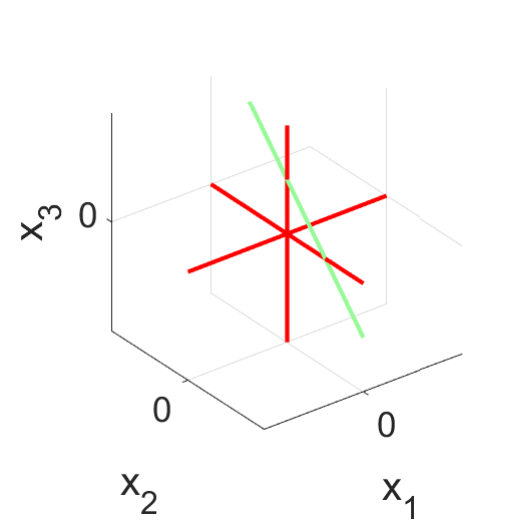}
    \caption{Sparse solutions as intersections with $\mathbf{A} \mathbf{x} = \mathbf{b}$.}
    \label{fig:l0}
  \end{subfigure}
  \caption{Solving the sparse approximation problem via systems of polynomial equations.}
  \label{fig:sparse_via_poly}
\end{figure}

\section{Tensors and Polynomial Equations}
\label{sec:tensors}

\subsection{Explicit Canonical Polyadic Decomposition}
\label{sec:rank&cpd}
A rank-1 tensor of order~$N$ is the outer product of $N$ nonzero vectors{; see Notation in Section~\ref{sec:intro}}. The rank of a tensor is the minimal number of rank-1 tensors required to represent it as a sum. This representation is called Canonical Polyadic Decomposition (CPD).
Remarkably, the CPD is unique (up to permutation and scaling) under mild conditions {\cite[Section~IV]{nikos_overview}}, unlike matrix rank decompositions, whose uniqueness typically requires conditions such as orthogonality.
\begin{definition}[Canonical Polyadic Decomposition] 
    \label{def:CPD}
    The CPD of a tensor \( \mathcal{A} \in \mathbb{R}^{I_1 \times I_2 \times \cdots \times I_K} \) is given by
    \[
    \mathcal{A} = \sum_{r=1}^{R} \mathbf{u}_r^{(1)} \op \mathbf{u}_r^{(2)}  \op \cdots  \op \mathbf{u}_r^{(K)} \triangleq \llbracket \mathbf{U}^{(1)}, \mathbf{U}^{(2)}, \ldots, \mathbf{U}^{(K)} \rrbracket,
    \]
    where \( R \) is the tensor rank, and \( \mathbf{u}_r^{(k)} \in \mathbb{R}^{I_k} \) are the factor vectors for each mode \( k \).
\end{definition}

Two main approaches for the computation of CPD exist: \emph{algebraic} methods and \emph{optimization-based} methods. Algebraic methods, such as those based on the generalized eigenvalue decomposition (GEVD)~{\cite{nikos_overview}}, are exact in the noiseless setting. In the presence of noise, the approximate solution of the algebraic algorithm can be used as an initial guess, which can be refined using optimization-based approaches. 
In this work, we opt to minimize a nonlinear least squares (NLS) objective to compute the CPD~\cite{nico_numerical}. Specifically, given a target rank~$R$, we fit the rank-$R$ CPD model
$
\llbracket \mathbf{U}^{(1)}, \mathbf{U}^{(2)}, \ldots, \mathbf{U}^{(K)} \rrbracket
$
with factor matrices $\mathbf{U}^{(1)}, \mathbf{U}^{(2)}, \ldots, \mathbf{U}^{(K)}$ to the data tensor~$\mathcal{T}$ by solving
\begin{equation}
\label{eq:explicit_CPD}
\min_{\mathbf{U}^{(1)}, \mathbf{U}^{(2)}, \ldots, \mathbf{U}^{(K)}} f = \lVert \llbracket \mathbf{U}^{(1)}, \mathbf{U}^{(2)}, \ldots, \mathbf{U}^{(K)} \rrbracket - \mathcal{T} \rVert_F^2.
\end{equation}

Gauss--Newton (GN) algorithm is a possible choice to minimize the NLS objective function. We employ an existing optimization framework built into Tensorlab~\cite{tensorlab3.0}. Only the key components need to be provided to the algorithm: the objective function, its gradient, and the Gramian of the Jacobian~\cite{nico_numerical}. When paired with trust-region strategies, GN is globally convergent under mild conditions ~\cite{numerical_opt}.

\subsection{Implicit CPD}
\label{sec:multiline_lscpd}
In the above, the tensor entries were explicitly given. However, sometimes the entries are only known implicitly. An example is the LS-CPD framework \cite{bousse_LS_CPD}, where the tensor appears as the solution to a linear system subject to a CPD constraint:
\begin{equation}
\label{eq:ls_cpd}
    \mathbf{Cy} = \mathbf{d}, \quad \mathbf{y} = \text{vec}(\mathcal{Y}), \quad \mathcal{Y} = \llbracket \mathbf{U}^{(1)}, \ldots, \mathbf{U}^{(N)} \rrbracket,
\end{equation}
where \( \mathbf{C} \in \mathbb{R}^{M \times K} \) and \( \mathbf{d} \in \mathbb{R}^{M} \). \( \mathcal{Y} \in \mathbb{R}^{I_1 \times \cdots \times I_N} \) is a rank-\( R \) tensor with factor matrices \( \mathbf{U}^{(n)} \in \mathbb{R}^{I_n \times R} \).
If the linear system is full rank, solving it yields a vectorized tensor of which the CPD can subsequently be computed. If the system is underdetermined, a CPD constrained solution can be obtained by minimizing
the NLS objective
\[
\min_{\mathbf{U}^{(1)}, \ldots, \mathbf{U}^{(N)}} f = \|\mathbf{C} \, \text{vec}( \llbracket \mathbf{U}^{(1)}, \ldots, \mathbf{U}^{(N)} \rrbracket ) -\mathbf{d}\|_F^2.
\]
The computation is analogous to the explicit CPD case and requires expressions for the objective function, its gradient and the Gramian of the Jacobian.

\subsection{Monomials and Rank-1 Tensors}
\label{sec:monomials and rank-1 tensors}
The sparsity-promoting monomials $x_{i_1} \cdots x_{i_{s+1}} = 0$ with $i_1 \neq \cdots \neq i_{s+1}$ appear in the symmetric rank-1 tensor
\[
\mathcal{X}
=
\underbrace{\mathbf{x} \op \cdots \op \mathbf{x}}_{s+1~\text{times}}.
\]
These monomials can be collected into a structured multilinear system of the form~\eqref{eq:ls_cpd} with rank $R=1$: 
\begin{equation}
    \label{eq:sparsity promoting LSCPD}
    \mathbf{Z} \,\text{vec}(\mathcal{X}) = \mathbf{0}.
\end{equation}
The matrix $\mathbf{Z}$ is defined as a selection matrix whose rows select the multilinear terms $x_{i_1} \cdots x_{i_{s+1}}$, with $i_1 \neq \cdots \neq i_{s+1}$ appearing in $\text{vec}(\mathcal{X}) = \kron_{k=1}^{s+1} \mathbf{x}$. 
Equation~\eqref{eq:sparsity promoting LSCPD} enforces all of these
selected terms to be equal to zero.

We extend this LS-CPD system with the data fitting term $\mathbf{Ax} = \mathbf{b}$, leading to the formulation
\begin{equation}
\label{eq:main_formulation}
    \begin{bmatrix}
        \mathbf{A} & \mathbf{0} \\ \mathbf{0} & \mathbf{Z} 
    \end{bmatrix}
     \begin{bmatrix}
         \mathbf{x} \\ \kron_{i=1}^{s+1} \mathbf{x}
     \end{bmatrix} = \begin{bmatrix}
         \mathbf{b} \\ \mathbf{0}
     \end{bmatrix}.
\end{equation}
The computation can follow the principles outlined above. In Section~\ref{sec:ls_cp_based}, we work out two efficient computational variants.

\section{Polynomial Equations and Multiple Solutions}
\label{sec:polyeq}
Formulation~\eqref{eq:main_formulation} describes a single solution $\mathbf{x}$ to a particular system of polynomial equations. We now show that all solutions of a polynomial system are described jointly by the null space of a matrix holding the polynomial coefficients.

\subsection{Roots as Structured Null Space Vectors}
\label{sec:roots_as_nullspace}

Consider a system of $p$ multivariate polynomials in $n$ complex variables
\begin{equation}
\label{eq:polysys}
   \begin{cases}
   f_1(x_1, \ldots, x_n) = 0 \\
   \makebox[\widthof{$f_1(x_1,\ldots,x_n)=0$}][c]{\vdots} \\
   f_p(x_1, \ldots, x_n) = 0
   \end{cases}
\end{equation}
with degrees $\deg(f_i) = d_i$ and maximal degree $d_0 = \max_i d_i$. The goal is to find the common roots $\mathbf{x}^{(q)} = [x^{(q)}_1, \ldots, x^{(q)}_n]^\top \in \mathbb{C}^n$, or, in the noisy case, the best fitting roots. We assume that~\eqref{eq:polysys} has $Q$ distinct roots.

For each root $\mathbf{x}^{(q)}$, the vector of all distinct monomials of total degree up to $d$ evaluated at this root is the multivariate Vandermonde vector
\begin{equation}
   \label{eq:vdm}
   \mathbf{v}_q(d) =
   \begin{bmatrix}
   1 & x^{(q)}_1 & x^{(q)}_2 & x^{(q)}_1 x^{(q)}_2 & \dots & x^{(q)d}_n
   \end{bmatrix}^\top,
\end{equation}
of length $N_d = \binom{n+d}{d}$. Every polynomial in~\eqref{eq:polysys} is linear in these monomials, so its coefficients can be collected in a matrix $\mathbf{M}_0 \in \mathbb{C}^{p \times N_{d_0}}$ such that
\begin{equation}
\label{eq:M0}
   \mathbf{M}_0 \mathbf{v}_q(d_0) = \mathbf{0},
   \qquad q = 1,\ldots,Q.
\end{equation}
Stacking all Vandermonde vectors into the multivariate Vandermonde matrix
$\mathbf{V}(d_0) = [\,\mathbf{v}_1(d_0) \; \cdots \; \mathbf{v}_{Q}(d_0)\,]$
gives
\begin{equation}
\label{eq:MV}
   \mathbf{M}_0 \mathbf{V}(d_0) = \mathbf{0}.
\end{equation}

With $d_0 = s+1$, equation~\eqref{eq:main_formulation} can be expressed in the form~\eqref{eq:M0} by dropping repeated monomials. Specifically, $\mathbf{b}$ has moved to the left-hand side, where it multiplies the leading entry $1$ of $\mathbf{v}_q(d_0)$, and $\kron_{i=1}^{d_0}\mathbf{x}$ is replaced by a vector $\mathbf{v}_q(d_0)$ holding a single copy of each monomial. The entries of $\mathbf{v}_q(d_0)$ are the distinct entries of the symmetric rank-1 tensor
\begin{equation}
\label{eq:rank1_root}
   \underbrace{\mathbf{x}^{(q)}_{h} \op \cdots \op \mathbf{x}^{(q)}_{h}}_{d_0~\text{times}},
   \quad \text{with} \quad
    \mathbf{x}^{(q)}_{h} = [1, x^{(q)}_1, \ldots, x^{(q)}_n]^\top.
\end{equation}
Thus, equation~\eqref{eq:MV} states that the null space of $\mathbf{M}_0$ contains $Q$ such rank-1 tensors, one per root. Solving a polynomial system is in this sense a subspace variant of~\eqref{eq:main_formulation}, and the task is to separate the $Q$ rank-1 terms that live in the null space.

\subsection{Recovering the Roots from the Null Space}
\label{sec:roots_from_nullspace}

Assume for now that the $Q$ Vandermonde vectors are linearly independent and that
\begin{equation}
\label{eq:nullity}
   \dim \ker (\mathbf{M}_0) = Q,
\end{equation}
so that the columns of $\mathbf{V}(d_0)$ form a basis of the null space of $\mathbf{M}_0$; the general case is treated in Section~\ref{sec:macaulay_method}. We write $\mathbf{V}(d)$ with $d = d_0$ in this subsection.

In practice $\mathbf{V}(d)$ is not available, since it depends on the unknown roots. However, it can be obtained from a numerical (orthonormal) basis $\mathbf{K}(d) = [\mathbf{k}_1~\cdots~\mathbf{k}_Q]$ for the null space of $\mathbf{M}_0$, obtained from its singular value decomposition. As $\mathbf{K}(d)$ and $\mathbf{V}(d)$ span the same subspace, they are related through an unknown nonsingular matrix $\mathbf{W} \in \mathbb{C}^{Q \times Q}$:
\begin{equation}
\label{eq:KVW}
   \mathbf{K}(d) = \mathbf{V}(d) \mathbf{W}^{\top}.
\end{equation}

Every basis vector $\mathbf{k}_j = \sum_{q} w_{jq} \mathbf{v}_q(d)$ is a mixture of the same $Q$ symmetric rank-1 tensors~\eqref{eq:rank1_root}, and only the coefficients $w_{jq}$ differ. Collecting these coefficients in an additional mode gives an explicit CPD whose rank-1 terms encode the roots. {Indeed, with $\mathbf{X}_h = [\,\mathbf{x}^{(1)}_h \; \cdots \; \mathbf{x}^{(Q)}_h\,]$, the tensor $\mathcal{K} = \llbracket \mathbf{X}_h, \; \ldots, \; \mathbf{X}_h, \; \mathbf{W} \rrbracket$ can be obtained from $\mathbf{K}(d)$ by reshaping. The CPD is formally guaranteed to return all Q roots} of~\eqref{eq:polysys} at once. To explain this, let us consider a further reshaping of $\mathcal{K}$ into a third-order tensor $\mathcal{Y} \in \mathbb{C}^{N_{d-1} \times (n+1) \times Q}$. It is easy to show that this tensor admits the CPD
\begin{equation}
\label{eq:cpd_null}
   \mathcal{Y} = \llbracket \mathbf{V}(d-1), \mathbf{V}(1), \mathbf{W}\rrbracket,
\end{equation}
whose second factor matrix contains the roots:
\begin{equation}
\label{eq:V1}
   \mathbf{V}(1) = \begin{bmatrix}
       1 & 1 & \cdots & 1 \\
       x_1^{(1)} & x_1^{(2)} & \cdots & x_1^{(Q)} \\
       \vdots & \vdots & & \vdots \\
       x_n^{(1)} & x_n^{(2)} & \cdots & x_n^{(Q)}
   \end{bmatrix}.
\end{equation}
Under~\eqref{eq:nullity}, $\mathbf{V}(d-1)$ and $\mathbf{W}$ have full column rank $Q$, and the columns of $\mathbf{V}(1)$ are pairwise non-collinear because the $Q$ roots are distinct. This guarantees the uniqueness of~\eqref{eq:cpd_null}; moreover, the factors can be computed by means of a simple matrix eigenvalue decomposition (EVD), see~\cite[Section IV]{nikos_overview} and references therein.

\subsection{The Macaulay Construction}
\label{sec:macaulay_method}
For a general polynomial system, assumption~\eqref{eq:nullity} fails. If $\dim \ker (\mathbf{M}_0) > Q$, the null space contains additional directions that do not correspond to any root, so the columns of $\mathbf{V}(d_0)$ no longer form a basis, and~\eqref{eq:KVW} does not hold.

The spurious directions can be removed by adding rows that every root still satisfies. Since $f_i(\mathbf{x}^{(q)}) = 0$ implies $\mathbf{x}^{(q)\boldsymbol{\alpha}} f_i(\mathbf{x}^{(q)}) = 0$ for every monomial $\mathbf{x}^{\boldsymbol{\alpha}}$, the shifted polynomials $\mathbf{x}^{\boldsymbol{\alpha}} f_i$ vanish at all $Q$ roots, so their coefficients give extra rows that constrain $\mathbf{v}_q(d)$ without excluding any root.

For a degree $d \geq d_0$, the Macaulay matrix $\mathbf{M}(d)$ collects, in its rows, the coefficients of all shifted polynomials $\mathbf{x}^{\boldsymbol{\alpha}} f_i$ with $\deg(\mathbf{x}^{\boldsymbol{\alpha}}) \in \{0,1,\ldots,d-d_i\}$, expressed in the monomials of $\mathbf{v}_q(d)$~\cite{macaulay_matrix}. Its rows include those of $\mathbf{M}_0$, obtained for $\boldsymbol{\alpha} = \mathbf{0}$, and every root still satisfies
\begin{equation}
\label{eq:macaulay_null}
   \mathbf{M}(d) \mathbf{v}_q(d) = \mathbf{0},
   \qquad
   \mathbf{M}(d) \mathbf{V}(d) = \mathbf{0}.
\end{equation}
The structured vectors of interest therefore remain in the null space at every degree, while the added rows keep removing directions that do not correspond to roots.

If the system has a finite number of roots, the dimension of the null space of $\mathbf{M}(d)$ stabilizes for $d \geq d^*$. The so-called degree of regularity $d^*$ is the lowest degree for which the null space is spanned only by vectors corresponding to the roots~{\cite{macaulay_matrix, macaulay_poly,Cox1998}}, that is,
\begin{equation}
\label{eq:regularity}
   \ker(\mathbf{M}(d)) = \operatorname{span}\{\mathbf{v}_1(d), \ldots, \mathbf{v}_Q(d)\},
   \qquad d \geq d^*.
\end{equation}
At the degree of regularity the situation of Section~\ref{sec:roots_from_nullspace} is restored, with $\mathbf{M}(d^*)$ in the role of $\mathbf{M}_0$, and the roots follow from a variant of~\eqref{eq:cpd_null} constructed for $d \geq d^*$. In practice neither $Q$ nor $d^*$ is known beforehand, so $d$ is increased until the dimension of the null space of $\mathbf{M}(d)$ stabilizes, and that dimension is taken as the number of roots.

For sparse approximation, the polynomial system consists of $\mathbf{Ax} = \mathbf{b}$ and the $\binom{n}{s+1}$ sparsity-inducing equations. The latter turn the infinitely many solutions of the underdetermined $\mathbf{Ax} = \mathbf{b}$ into a finite set: if every $s$ columns of $\mathbf{A}$ are linearly independent, each support of size at most $s$ carries at most one solution, so there are at most $\binom{n}{s}$ roots~\cite{FoucartRauhut2013}. Computing $\mathbf{K}(d)$ at the degree of regularity and decomposing~\eqref{eq:cpd_null} therefore returns \textit{all} of them at once in $\mathbf{V}(1)$. Solutions that are sparser than $s$ are included as well. The limitation is the computational cost. The Macaulay matrix has $N_d = \binom{n+d}{d}$ columns and a comparable number of rows, so {computing} $\mathbf{K}(d)$ may be expensive. As a speed-up, the columns of $\mathbf{M}(d)$ corresponding to monomials that need to be zero, can be dropped, and the null space of the resulting smaller matrix computed.

\section{Efficient Optimization-Based Algorithms}
\label{sec:ls_cp_based}
In LS-CPD problems, it is generally advantageous to exploit the structure of the system matrix, see \cite{hendrikx2022block} and references therein. This is also true in the present setting. The matrix $\mathbf{Z}$ in our core equation~\eqref{eq:main_formulation} is a large matrix of size $(m+n(n-1)\cdots(n-s)) \times (n+n^{s+1})$; however, as a selection matrix it is just a part of an identity matrix. Exploiting this fact, the GN quantities for the NLS optimization can be expressed in terms of elementary symmetric polynomials (ESPs). The latter can be evaluated cheaply, leading to computationally efficient implementations; see Section~\ref{sec:complexity}.

\begin{definition}[Elementary Symmetric Polynomial \cite{elem_sym_poly}]
\label{def:elem_sym_poly}
ESP of degree $d$ in $n$ variables is obtained by summing together all monomials of degree $d$ in $n$ variables:
\begin{equation*}
    e_d(z_1,\ldots,z_n)
    =
    \sum_{1 \leq i_1 < \cdots < i_d \leq n}
    z_{i_1}\cdots z_{i_d},
\end{equation*}
with $e_0(z_1,\ldots,z_n)=1$ and $e_d(z_1,\ldots,z_n)=0$ for $d>n$.
\end{definition}
In particular, we are interested in the following ESPs:
\begin{align}
\label{eq:ESP}
    E_{s+1}
    &:= e_{s+1}(x_1^2,\ldots,x_n^2),
    \\
\notag
    E_s^{(i)}
    &:= e_s(x_1^2,\ldots,x_{i-1}^2,x_{i+1}^2,\ldots,x_n^2),
    \\
\notag
    E_{s-1}^{(i,j)}
    &:= e_{s-1}(x_1^2,\ldots,x_{i-1}^2,x_{i+1}^2,\ldots,
    x_{j-1}^2,x_{j+1}^2,\ldots,x_n^2).
\end{align}

\subsection{Algorithms}

We propose two algorithms based on the formulation in~\eqref{eq:main_formulation}. The first, {SESP-D}, jointly optimizes data fitting and sparsity promotion. The second, SESP-P, seeks the sparsest solution that exactly fits the data. Both algorithms return a single solution for a given initialization; when multiple sparse solutions exist, different solutions may be obtained by reinitializing.

{\emph{Sparsification via ESP - Direct:}}
SESP-D applies the NLS approach to:
\begin{equation}
\label{eq:elscpd_formulation}
    \begin{bmatrix}
        \mathbf{A} & \mathbf{0} \\
        \mathbf{0} & \lambda \mathbf{Z}
    \end{bmatrix}
    \begin{bmatrix}
        \mathbf{x} \\
        \kron_{i=1}^{s+1} \mathbf{x}
    \end{bmatrix}
    =
    \begin{bmatrix}
        \mathbf{b} \\
        \mathbf{0}
    \end{bmatrix}.
\end{equation}
The scalar $\lambda$ is introduced to balance data fitting and sparsity {promotion.} By exploiting the structure of $\mathbf{Z}$ (Appendix~\ref{appx:structure_M}), the following expressions for the objective function, gradient, and Jacobian Gramian are derived in Appendix~\ref{appx:elscpd_derivation}:
\begin{align}
\label{eq:elscpd_obj_grad_gram}
    f =&~ \tfrac{1}{2}  \| \mathbf{Ax} - \mathbf{b} \|_2^2
    + \tfrac{1}{2} \lambda^2 (s+1)! E_{s+1} \\
\notag
    \mathbf{J}^\top \mathbf{r} =&~ \mathbf{A}^\top (\mathbf{Ax} - \mathbf{b})
    + \lambda^2 (s+1)! \sum_{i=1}^n E_s^{(i)} x_i \mathbf{e}_i \\
\notag
    \mathbf{J}^\top \mathbf{J} =& \mathbf{A}^\top \mathbf{A}
    + \lambda^2 (s+1)! \big( \sum_{i=1}^n E_s^{(i)} \mathbf{e}_i \mathbf{e}_i^\top \\ &
\notag
    + \sum_{\substack{i,j=1 \\ i \neq j}}^n E_{s-1}^{(i,j)} x_i x_j \mathbf{e}_i \mathbf{e}_j^\top \big),
\end{align}

{\emph{Sparsification via ESP - Projected:}}
The linear system $\mathbf{Ax}=\mathbf{b}$ is first solved to obtain a particular solution $\mathbf{x}_p$. The null space $\mathbf{V}$ of $\mathbf{A}$ is then computed, yielding the parametrization $\mathbf{x} = \mathbf{x}_p + \mathbf{V}\mathbf{y}$. The goal is to determine $\mathbf{y}$ such that the resulting solution $\mathbf{x}$ is sparse. Sparsity is enforced through the matrix $\mathbf{Z}$, leading to
\[
\mathbf{Z}\,\operatorname{vec}(\mathcal{X})=\mathbf{0},
\qquad
\operatorname{vec}(\mathcal{X})
=
\kron_{i=1}^{s+1}
(\mathbf{x}_p+\mathbf{V}\mathbf{y}).
\]
The resulting NLS problem is solved using GN algorithm.

The optimization variable is $\mathbf{y}$, with the solution recovered through $\mathbf{x}=\mathbf{x}_p+\mathbf{V}\mathbf{y}$. For notational convenience, however, the objective function, gradient, and Jacobian Gramian are expressed in terms of $\mathbf{x}$, which can be readily computed from $\mathbf{y}$ at each iteration. Appendix~\ref{appx:nullspace_derivation} derives the expressions
\begin{align}
\label{eq:nullspace_obj_grad_gram}
    f &= \tfrac{1}{2}  (s+1)!E_{s+1} \\
\notag
    \mathbf{J}^\top \mathbf{r} &= (s+1)! \mathbf{V}^\top\sum_{i=1}^n E_s^{(i)} x_i \mathbf{e}_i \\
\notag
    \mathbf{J}^\top \mathbf{J} &= (s+1)! \mathbf{V}^\top \big( \sum_{i=1}^n E_s^{(i)} \mathbf{e}_i \mathbf{e}_i^\top + \sum_{\substack{i,j=1 \\ i \neq j}}^n E_{s-1}^{(i,j)} x_i x_j  \mathbf{e}_i \mathbf{e}_j^\top \big) \mathbf{V}.
\end{align}

\subsection{Computational Complexity of the LS-CPD-Based Methods}
\label{sec:complexity}

ESPs appearing in~\eqref{eq:elscpd_obj_grad_gram} and~\eqref{eq:nullspace_obj_grad_gram} involve sums with up to $\binom{n}{s+1}$ terms, and their naive evaluation requires $\mathcal{O}(n^{s+1})$ operations. However, ESPs have a structure which we exploit to reduce the complexity. The proof of Lemma~\ref{lemma:esp_recursion} (Appendix~\ref{appx:esp_recursion}) establishes a recursion which, when applied, evaluates the objective functions of {SESP-D} and {SESP-P} in $\mathcal{O}(ns)$ time.
\begin{lemma}
\label{lemma:esp_recursion}
$e_d(z_1, \ldots, z_n)$ can be computed in $O(nd)$ time.
\end{lemma}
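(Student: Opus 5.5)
The plan is to derive a recursion for $e_d$ over the number of variables and evaluate it by dynamic programming. Write $e_k^{(j)} := e_k(z_1,\ldots,z_j)$ for the ESP of degree $k$ in the first $j$ variables, with the conventions of Definition~\ref{def:elem_sym_poly}: $e_0^{(j)} = 1$ for all $j \geq 0$, and $e_k^{(j)} = 0$ for $k > j$. In particular $e_k^{(0)} = 0$ for $k \geq 1$.

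The key identity is
\[
e_k^{(j)} = e_k^{(j-1)} + z_j\, e_{k-1}^{(j-1)}, \qquad 1 \le k,\ 1\le j.
\]
To prove it, split the index sets $1 \le i_1 < \cdots < i_k \le j$ in the defining sum into two groups according to whether $i_k = j$. The sets with $i_k < j$ sum exactly to $e_k^{(j-1)}$. The sets with $i_k = j$ contribute $z_j$ times the sum over $1 \le i_1 < \cdots < i_{k-1} \le j-1$, which is $z_j\, e_{k-1}^{(j-1)}$. The boundary cases follow from the conventions: for $k=1$ the second term is $z_j\cdot 1$, and when $k > j$ both sides vanish.

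The algorithm keeps a vector $(e_0^{(j)},\ldots,e_d^{(j)})$ of length $d+1$. It starts from $(1,0,\ldots,0)$ at $j=0$ and sweeps $j = 1,\ldots,n$. At each step it updates the entries in place for $k = \min(j,d)$ down to $1$. The descending order matters: it guarantees that $e_{k-1}^{(j-1)}$ has not yet been overwritten when it is used. Each step therefore costs at most $d$ multiply-adds, and after $n$ steps the last entry is $e_d(z_1,\ldots,z_n)$. The total cost is $O(nd)$ operations with $O(d)$ memory. Correctness follows by induction on $j$ from the recursion.

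No step here is a serious obstacle. The only care needed is with the boundary conventions, namely that $e_0 = 1$ and that $e_k = 0$ when $k > j$, and with the in-place update order.

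The same pass also yields all lower-degree values $e_0,\ldots,e_d$. This is useful for the quantities in \eqref{eq:ESP}. For example, the leave-one-out ESPs $E_s^{(i)}$ can be obtained by running the recursion with $z_i = x_i^2$ omitted, at cost $O(ns)$ each. Alternatively, a prefix/suffix combination can be used. Only the single-ESP bound is needed for the lemma, though.
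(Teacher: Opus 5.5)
Your proposal is correct and follows the same route as the paper. You split the defining sum according to whether $z_j$ appears, obtaining $e_k(z_1,\ldots,z_j)=e_k(z_1,\ldots,z_{j-1})+z_j\,e_{k-1}(z_1,\ldots,z_{j-1})$, and sweep it over $j=1,\ldots,n$ at $O(d)$ cost per step; your explicit handling of the boundary conventions and the descending in-place update order is just a more careful rendering of that argument.
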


Applying the same recursion directly to evaluate $\mathbf{J}^\top \mathbf{r}$ and $\mathbf{J}^\top \mathbf{J}$ would require $\mathcal{O}(n^2 s)$ and $\mathcal{O}(n^3 s)$ operations, respectively. Lemma~\ref{lemma:leave_one_out} and Lemma~\ref{lemma:leave_two_out} reduce these complexities to $\mathcal{O}(ns)$ and $\mathcal{O}(n^2 s)$, respectively. The proofs in Appendices~\ref{appx:leave_one_out} and~\ref{appx:leave_two_out} are constructive.

\begin{lemma}
\label{lemma:leave_one_out}
Let $i=1,\dots, n$ and $s \leq n$. All $n$ values $E_s^{(i)}$ can be computed simultaneously in $O(ns)$ time.
\end{lemma}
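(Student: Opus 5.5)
The approach is a prefix/suffix decomposition of the generating polynomial, so that no division is needed. Write $z_k = x_k^2$ and let
\[
P(t)=\prod_{k=1}^n (1+z_k t)=\sum_{d=0}^n e_d(z_1,\ldots,z_n)\,t^d .
\]
Then $E_s^{(i)}$ is the coefficient of $t^s$ in $P(t)/(1+z_i t)$. Since only coefficients up to $t^s$ matter, all products are truncated at degree $s$.

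The first step defines truncated prefix coefficient vectors $\mathbf{p}^{(i)}\in\R^{s+1}$, holding the coefficients of $t^0,\ldots,t^s$ in $\prod_{k<i}(1+z_k t)$. These are exactly $e_d(z_1,\ldots,z_{i-1})$ for $d\le s$. Likewise, the suffix vectors $\mathbf{q}^{(i)}$ hold the truncated coefficients of $\prod_{k>i}(1+z_k t)$. Both families follow from the recursion behind Lemma~\ref{lemma:esp_recursion},
\[
e_d(z_1,\ldots,z_i)=e_d(z_1,\ldots,z_{i-1})+z_i\,e_{d-1}(z_1,\ldots,z_{i-1}).
\]
Each update costs $O(s)$, so all $n$ prefix vectors and all $n$ suffix vectors together cost $O(ns)$ time and memory.

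The second step combines them. For each $i$,
\[
E_s^{(i)}=\sum_{d=0}^{s} p^{(i)}_d\, q^{(i)}_{s-d},
\]
because removing $z_i$ splits the remaining variables into the disjoint sets $\{z_k\}_{k<i}$ and $\{z_k\}_{k>i}$. A degree-$s$ squarefree monomial in the remaining variables factors uniquely into a degree-$d$ part from the prefix and a degree-$(s-d)$ part from the suffix. This is a single coefficient of a truncated convolution, so it costs $O(s)$ per $i$ and $O(ns)$ in total.

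The main obstacle is avoiding the naive $O(n^2 s)$ cost of recomputing each leave-one-out polynomial from scratch. There is a tempting alternative: compute $e_d(z_1,\ldots,z_n)$ once and invert $E_d = E_d^{(i)} + z_i E_{d-1}^{(i)}$ as $E_d^{(i)}=e_d-z_iE_{d-1}^{(i)}$. That also runs in $O(s)$ per $i$, but it is numerically unstable, since it involves alternating sums with large $z_i$. One could instead use the reverse recursion with division by $z_i$ when $z_i$ is large. I would present the prefix/suffix argument as the proof, because it is division-free and its correctness reduces to the combinatorial splitting identity above.
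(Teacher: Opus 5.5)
Your proposal is correct and is essentially the paper's proof. Your prefix and suffix vectors $\mathbf{p}^{(i)}$ and $\mathbf{q}^{(i)}$ are exactly the paper's left and right partial ESP tables $E_d^{\mathrm{L}}(i-1)$ and $E_d^{\mathrm{R}}(i+1)$; both are built with the recursion of Lemma~\ref{lemma:esp_recursion} in two $O(ns)$ sweeps and combined via $E_s^{(i)}=\sum_{d=0}^{s}E_d^{\mathrm{L}}(i-1)\,E_{s-d}^{\mathrm{R}}(i+1)$ at $O(s)$ per index, while the generating-function framing and the remark on the instability of the subtractive downdate are helpful extras that do not change the argument.
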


\begin{lemma}
\label{lemma:leave_two_out}
Let $1 \leq i < j \leq n$ and $s \leq n$. All $\binom{n}{2}$ values $E_{s-1}^{(i,j)}$ can be computed simultaneously in $O(n^2 s)$ time.
\end{lemma}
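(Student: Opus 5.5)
We need all leave-two-out ESPs $E_{s-1}^{(i,j)}$ in $O(n^2 s)$ time, which is $O(s)$ per pair. The plan is to reduce to Lemma~\ref{lemma:leave_one_out}. Writing $z_k = x_k^2$, we first compute, for each fixed $i$, the leave-one-out values $E_k^{(i)} = e_k(\mathbf{z}_{\setminus i})$ for all degrees $k=0,\ldots,s$. The recursion of Lemma~\ref{lemma:esp_recursion} produces all degrees $0,\ldots,d$ at once in $O(nd)$, and the construction in Lemma~\ref{lemma:leave_one_out} carries these lower degrees along. So all $E_k^{(i)}$, $k\le s$, are available in $O(ns)$ total; if needed, this costs at most $O(n^2 s)$, which is still within budget. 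We then remove the second variable $j$ using the identity
\[
e_k(\mathbf{z}_{\setminus i}) = e_k(\mathbf{z}_{\setminus\{i,j\}}) + z_j\, e_{k-1}(\mathbf{z}_{\setminus\{i,j\}}),
\]
which holds because each $k$-subset of indices either avoids $j$ or contains it.

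\emph{Forward division.} Solving the identity for the leave-two-out values gives the recursion
\[
e_k^{(i,j)} = E_k^{(i)} - z_j\, e_{k-1}^{(i,j)}, \qquad e_0^{(i,j)}=1,
\]
which yields $E_{s-1}^{(i,j)}$ after $s-1$ steps. That is $O(s)$ per pair and $O(n^2 s)$ over all $\binom{n}{2}$ pairs. Symmetry in $(i,j)$ lets us restrict to $i<j$. Correctness follows by induction on $k$ from the identity, together with $e_0=1$.

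\emph{Main obstacle: numerical stability.} Forward division is exact in exact arithmetic. In floating point, however, it amounts to polynomial deflation by the factor $(1+z_j t)$ in the generating function $\prod_k (1+z_k t)$, and it can amplify errors when $z_j$ is large. If this becomes a concern, my first choice is the stable alternative used for leave-one-out, namely prefix/suffix products of generating polynomials. Fix $i$, and for the remaining indices form prefix ESP vectors $P_\ell$ (the degree-$\le s$ coefficients of $\prod_{k<\ell,\,k\ne i}(1+z_kt)$) and suffix vectors $S_\ell$. Each takes $O(ns)$ to build. Then
\[
E_{s-1}^{(i,j)} = \sum_{a=0}^{s-1} P_j[a]\, S_j[s-1-a],
\]
which costs $O(s)$ per $j$ and hence $O(ns)$ per $i$, or $O(n^2 s)$ overall. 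This version uses only additions of nonnegative terms, since the $z_k$ are squares, and it again matches the claimed complexity. Either route suffices for the lemma; I would present the prefix/suffix construction as the constructive proof and mention the division recursion as a simpler variant.
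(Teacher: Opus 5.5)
Your prefix/suffix construction is the paper's proof: for each fixed $i$, the paper applies Lemma~\ref{lemma:leave_one_out} to the $n-1$ variables with $x_i^2$ removed, at $O(ns)$ per $i$ and $O(n^2 s)$ in total. The forward-division variant is a valid alternative in exact arithmetic, though obtaining every lower-degree $E_k^{(i)}$ through the convolutions of Lemma~\ref{lemma:leave_one_out} costs $O(s^2)$ per $i$ rather than $O(s)$; your $O(n^2 s)$ fallback already covers this.
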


Table~\ref{tab:complexities} reports per-iteration complexities of the GN algorithm, excluding iteration-independent quantities such as $\mathbf{V}$ and $\mathbf{A}^\top \mathbf{A}$, which are precomputed once. The remaining terms arise as follows: $ns$ and $n^2 s$ from evaluating the ESPs; $nm$ from computing the residual $\mathbf{A}\mathbf{x} - \mathbf{b}$ in $f$ and the product $\mathbf{A}^\top(\mathbf{A}\mathbf{x} - \mathbf{b})$ in $\mathbf{J}^\top \mathbf{r}$; $n(n-m)$ from the substitution $\mathbf{x} = \mathbf{x}_p + \mathbf{V}\mathbf{y}$ and the multiplication by $\mathbf{V}^\top$ in $\mathbf{J}^\top \mathbf{r}$; and $n^2(n-m)$ from the multiplications by $\mathbf{V}^\top$ and $\mathbf{V}$ in $\mathbf{J}^\top \mathbf{J}$.
\begin{table}[h!]
    \centering
    \caption{Per-iteration computational complexities of SESP-D and SESP-P}
    \begin{tabular}{|c|c|c|c|c|}
        \hline
        & Calls per Iteration & SESP-D &SESP-P\\
        \hline
        $f$ & $1 + it_{TR}$ &  $nm+ns$ & $n(n-m)+ns$\\
        \hline
        $\mathbf{J}^\top \mathbf{r}$ & 1 & $nm+ns$& $n(n-m)+ns$ \\
        \hline
        $\mathbf{J}^\top \mathbf{J}$ & 1 & $n^2s$ & $n^2(n-m) +  n^2s$\\
        \hline
    \end{tabular}
    \label{tab:complexities}
\end{table}

Since the dogleg trust-region GN method is used, the objective $f$ is evaluated $1 + it_{\mathrm{TR}}$ times per iteration, where $it_{\mathrm{TR}}$ denotes the number of trust-region sub-iterations. At each iteration, the descent step $\mathbf{p}$ is obtained by solving $\mathbf{J}^\top \mathbf{J}\, \mathbf{p} = \mathbf{J}^\top \mathbf{r}$ via the pseudoinverse,
\[
\mathbf{p} = (\mathbf{J}^\top \mathbf{J})^\dagger (\mathbf{J}^\top \mathbf{r}),
\]
which has computational complexity $\mathcal{O}(n^3)$~\cite{nla_trefethen}.

In Table~\ref{tab:complexities}, the terms $ns$ and $n^2 s$ correspond to enforcing sparsity, while $nm$, $n(n-m)$, and $n^2(n-m)$ correspond to enforcing $\mathbf{A}\mathbf{x} = \mathbf{b}$. Since $s \ll m$, the sparsity-enforcing terms do not dominate the overall complexity, confirming that polynomials can be used to design computationally efficient sparse approximation algorithms.

\section{Experiments}
\label{sec:comparison}
This section compares the proposed methods with BP/BPDN and OMP. Section~\ref{sec:implementation_algs} describes the MATLAB implementations and stopping criteria. Sections~\ref{sec:experiment_setup}, \ref{sec:initialization}, and~\ref{sec:regularization_param} detail the experimental setup, initialization strategies, and the choice of regularization parameter, respectively. Recovery accuracy and average runtime are evaluated for both the noiseless (Section~\ref{sec:noiseless_experiment}) and noisy (Section~\ref{sec:noisy_experiments}) settings. The algebraic framework allows explicit control over the target sparsity $s$ and enables the recovery of multiple {valid} sparse representations. This is illustrated in {Section~\ref{sec:sparse_modeling_experiment}. All }computations were performed in MATLAB~R2024b on a Lenovo ThinkPad~T480s equipped with an Intel\textsuperscript{\textregistered} Core\textsuperscript{TM}~i5-8350U CPU~@~1.70~GHz and 8~GB of RAM.

\subsection{MATLAB Implementation}
\label{sec:implementation_algs}
 
\subsubsection{SESP-D and SESP-P}
 
The Tensorlab~\cite{tensorlab3.0} implementation of the GN-NLS dogleg trust-region algorithm is used for both SESP-D and SESP-P. The function and step tolerances are $\varepsilon^2$ and $\varepsilon$, respectively, where $\varepsilon$ denotes machine precision. A maximum of $2000$ iterations is used.

At each iteration, the following stopping criterion based on support-restricted least squares is evaluated. Let $\mathbf{x}_k$ denote the current iterate, and let $S_k = \mathrm{supp}_s(\mathbf{x}_k)$ denote the index set of its $s$ largest-magnitude entries, where $s$ is the target sparsity supplied to the algorithm. Define the support-restricted least-squares solution $\mathbf{z} \in \mathbb{R}^n$ by
\begin{equation*}
    z_j = \begin{cases} \bigl(\mathbf{A}_{S_k}^\dagger \mathbf{b}\bigr)_j & j \in S_k, \\ 0 & j \notin S_k, \end{cases}
\end{equation*}
where $\mathbf{A}_{S_k} \in \mathbb{R}^{m \times s}$ is the submatrix of $\mathbf{A}$ formed by retaining only the columns indexed by $S_k$. The algorithm terminates when $\|\mathbf{A}\mathbf{z} - \mathbf{b}\|_2 \leq \tau$, where $\tau = 10^{-10}$ in the noiseless case and $\tau = \sigma\sqrt{m}$ in the noisy case. The latter threshold matches the expectation $\mathbb{E}[\|\mathbf{n}\|_2] = \sigma\sqrt{m}$ for additive Gaussian noise $\mathbf{n} \sim \mathcal{N}(\mathbf{0}, \sigma^2 \mathbf{I}_m)$. The algorithms return $\hat{\mathbf{x}} = \mathbf{z}$ as the computed solution to the sparse approximation problem.
 
\subsubsection{BP/BPDN}
 
In the noiseless case, the BP problem
\begin{equation*}
    \min_{\mathbf{x}} \|\mathbf{x}\|_1 \quad \text{subject to} \quad \mathbf{Ax} = \mathbf{b}
\end{equation*}
is solved following~\cite{basis_pursuit} using MATLAB's \texttt{linprog} interior-point solver with optimality tolerance $10^{-8}$, constraint tolerance $10^{-6}$, and a maximum of $2000$ iterations. In the noisy setting, where $\sigma$ denotes the standard deviation of the additive noise, the BPDN problem
\begin{equation*}
    \min_{\mathbf{x}} \|\mathbf{x}\|_1 \quad \text{subject to} \quad
    \|\mathbf{Ax} - \mathbf{b}\|_2 \leq \sigma\sqrt{m}
\end{equation*}
is solved using SPGL1~\cite{spgl1,spgl2}.

\subsubsection{OMP}
OMP~\cite{OMP_Tropp_Gilbert} greedily selects at each iteration the column of $\mathbf{A}$ most correlated with the current residual, incrementally building a sparse approximation of $\mathbf{b}$. The algorithm terminates when the residual norm falls below a prescribed threshold: $\|\mathbf{Ax} - \mathbf{b}\|_2 \leq 10^{-10}$ in the noiseless case, and $\|\mathbf{Ax} - \mathbf{b}\|_2 \leq \sigma\sqrt{m}$ in the noisy case.

\subsection{General Experimental Setup}
\label{sec:experiment_setup}

The matrix $\mathbf{A} \in \mathbb{R}^{m \times n}$ has i.i.d.\ entries drawn from the standard normal distribution, with columns subsequently normalized to unit $\ell_2$ norm. Unless otherwise specified, $m=32$ and $n=64$. The vector $\mathbf{x}_{\mathrm{true}} \in \mathbb{R}^{n}$ has $s$ nonzero entries, with support chosen uniformly at random and nonzero values drawn independently from the standard normal distribution.

In the noiseless setting, the measurement vector $\mathbf{b}$ is computed as $\mathbf{b} = \mathbf{A}\mathbf{x}_{\mathrm{true}}$. A computed solution $\hat{\mathbf{x}}$ is declared successful if $\|\hat{\mathbf{x}} - \mathbf{x}_{\mathrm{true}}\|_2 / \|\mathbf{x}_{\mathrm{true}}\|_2 < 10^{-5}$. The success rate is defined as the ratio of successful recoveries over $100$ trials.

In the noisy setting, the measurement vector is given by $\mathbf{b} = \mathbf{A}\mathbf{x}_{\mathrm{true}} + \mathbf{n}$, with additive Gaussian noise $\mathbf{n} \sim \mathcal{N}(\mathbf{0}, \sigma^2 \mathbf{I})$. The signal-to-noise ratio is defined as
\begin{equation*}
    \mathrm{SNR} = 10\log_{10}\!\left(\frac{\|\mathbf{A}\mathbf{x}_{\mathrm{true}}\|_2^2 / m}{\sigma^2}\right).
\end{equation*} 
The relative error $\|\hat{\mathbf{x}} - \mathbf{x}_{\mathrm{true}}\|_2 / \|\mathbf{x}_{\mathrm{true}}\|_2$ is reported instead of the success rate. All algorithms are evaluated on the same randomly generated problem instance, i.e., the same $\mathbf{A},$ $\mathbf{x}_{\mathrm{true}}$, and $\mathbf{b}$ within each trial. The sparsity level $s$ is assumed known and provided to the algorithms. For SESP-D, $\lambda$ is fixed at $10^{-4}$; Section~\ref{sec:regularization_param} motivates this choice. For SESP-P, we use the particular solution $\mathbf{x}_p$ satisfying $\mathbf{A}\mathbf{x}_p=\mathbf{b}$ with noiseless or noisy measurements $\mathbf{b}$, depending on the setting.

\subsection{Initialization}
\label{sec:initialization}

SESP-D and SESP-P solve nonconvex optimization problems, and the GN solver is a local method. The choice of initialization may therefore affect the outcome. We briefly discuss the initialization strategy used for each method.

\subsubsection{SESP-D}

Let $\mathbf{x}_0 = \alpha \mathbf{z}$, where $\mathbf{z} \sim \mathcal{N}(\mathbf{0}, \mathbf{I}_n)$ and $\alpha > 0$ is a scaling parameter to be determined. To ensure that the data-fitting term $f_1 = \tfrac{1}{2}\|\mathbf{A}\mathbf{x} - \mathbf{b}\|_2^2$ and the sparsity-promoting term $f_2 = \tfrac{1}{2}(s+1)!\,\lambda\, E_{s+1}(\mathbf{x}^{\cdot 2})$ in~\eqref{eq:elscpd_obj_grad_gram} contribute equally at initialization, we choose $\alpha$ such that their expected values are equal. A direct calculation yields
\begin{equation*}
    \mathbb{E}\left[f_1\right] = \frac{\alpha^2 n + s}{2}, \qquad
    \mathbb{E}\left[f_2\right] = \alpha^{2(s+1)} \prod_{j=0}^{s}(n - j),
\end{equation*}
where the expectations are taken over $\mathbf{z}$. Solving $\mathbb{E}[f_1] = \mathbb{E}[f_2]$ for $\alpha$ yields the initialization scale. Fig.~\ref{fig:ExperimentInitialization} shows the effect of increasing the number of random restarts on success rate: a single initialization suffices at low sparsity, while for higher sparsity reinitializations may improve the result.

\begin{figure}[ht]
  \centering
  \begin{tikzpicture}
    \begin{axis}[
      width=1\columnwidth,
      height=0.36\columnwidth,
      xlabel={Sparsity $s$},
      ylabel={Success Rate},
      xmin=1, xmax=15,
      ymin=0.25, ymax=1,
      ytick={0.25, 0.5, 0.75, 1},
      yticklabels={0.25, 0.5, 0.75, 1},
      xtick={1,3,5,7,9,11,13,15},
      grid=both,
      grid style={line width=0.4pt, draw=gray!30},
      major grid style={line width=0.6pt, draw=gray!50},
      tick label style={font=\small},
      label style={font=\small},
      legend style={
        at={(0,0)},
        anchor=south west,
        legend columns=3,
        font=\scriptsize,
        cells={anchor=west},
        row sep=-2pt,
        nodes={inner sep=1.5pt},
        /tikz/every even column/.append style={column sep=0.4em},
      },
      cycle list={
        {color={rgb,1:red,0.302;green,0.686;blue,0.290}, mark=+,        line width=1pt, mark size=2pt},
        {color={rgb,1:red,0.902;green,0.671;blue,0.008}, mark=triangle, line width=1pt, mark size=2pt, mark options={rotate=90}},
        {color={rgb,1:red,0.851;green,0.373;blue,0.008}, mark=x,        line width=1pt, mark size=2pt},
        {color={rgb,1:red,0.459;green,0.439;blue,0.702}, mark=star,     line width=1pt, mark size=2pt},
        {color={rgb,1:red,0.106;green,0.620;blue,0.467}, mark=pentagon, line width=1pt, mark size=2pt},
      },
    ]

    \addplot table[x=s, y=init1]{figureInitialization.dat};
    \addlegendentry{1 init}

    \addplot table[x=s, y=init2]{figureInitialization.dat};
    \addlegendentry{2 init}

    \addplot table[x=s, y=init5]{figureInitialization.dat};
    \addlegendentry{5 init}

    \addplot table[x=s, y=init10]{figureInitialization.dat};
    \addlegendentry{10 init}

    \addplot table[x=s, y=init20]{figureInitialization.dat};
    \addlegendentry{20 init}

    \end{axis}
  \end{tikzpicture}
  \caption{Success rate versus sparsity for SESP-D with different
           numbers of random initializations.}
  \label{fig:ExperimentInitialization}
\end{figure}

\subsubsection{SESP-P}

The SESP-P algorithm minimizes $E_{s+1}(\mathbf{x}^{\cdot 2})$ over the affine solution set via GN iterations on the null-space coordinate $\mathbf{y}$, where $\mathbf{x} = \mathbf{x}_p + \mathbf{V}\mathbf{y}$. Because the objective $E_{s+1}(\mathbf{x}_0^{\cdot 2})$ increases with the magnitude of each entry of $\mathbf{x}_0$, we initialize by minimizing $\|\mathbf{x}_p + \mathbf{V}\mathbf{y}_0\|_2$, which gives $\mathbf{y}_0 = -\mathbf{V}^\dagger \mathbf{x}_p$. Since the matrix $\mathbf{V}$ contains an orthonormal basis for the null space of $\mathbf{A}$, this simplifies to $\mathbf{y}_0 = -\mathbf{V}^\top \mathbf{x}_p$.

\subsubsection{Warm Start from BP/BPDN}

The BP/BPDN solution $\mathbf{x}_{\mathrm{BP/BPDN}}$ can be used to warm-start SESP-D and SESP-P. For SESP-D, we set $\mathbf{x}_0 = \mathbf{x}_{\mathrm{BP/BPDN}}$. For SESP-P, we set $\mathbf{y}_0 = \mathbf{V}^\top (\mathbf{x}_{\mathrm{BP/BPDN}} - \mathbf{x}_p)$. This strategy provides a starting point that is optimal in the $\ell_1$ sense.

\begin{figure}[ht]
  \centering
  \begin{tikzpicture}
    \begin{axis}[
      width=\columnwidth,
      height=0.54\columnwidth,
      xlabel={Sparsity $s$},
      ylabel={Success Rate},
      xmin=0, xmax=25,
      ymin=0, ymax=1,
      ytick={0, 0.2, 0.4, 0.6, 0.8, 1.0},
      xtick={0, 5, 10, 15, 20, 25},
      grid=both,
      grid style={line width=0.4pt, draw=gray!30},
      major grid style={line width=0.6pt, draw=gray!50},
      tick label style={font=\small},
      label style={font=\small},
      legend style={
        at={(0,0)},
        anchor=south west,
        font=\scriptsize,
        cells={anchor=west},
        legend columns=1,
        row sep=-2pt,
        nodes={inner sep=1.5pt},
      },
      cycle list={
        {color={rgb,1:red,0.000;green,0.447;blue,0.741}, mark=o,        line width=1pt, mark size=2pt},
        {color={rgb,1:red,0.851;green,0.325;blue,0.098}, mark=square,   line width=1pt, mark size=2pt},
        {color={rgb,1:red,0.929;green,0.694;blue,0.125}, mark=triangle, line width=1pt, mark size=2pt},
        {color={rgb,1:red,0.494;green,0.184;blue,0.557}, mark=triangle, line width=1pt, mark size=2pt, mark options={rotate=180}},
        {color={rgb,1:red,0.466;green,0.675;blue,0.188}, mark=star, line width=1pt, mark size=2pt},
        {color={rgb,1:red,0.302;green,0.745;blue,0.933}, mark=diamond, line width=1pt, mark size=2pt},
      },
    ]

    \addplot table[x=s, y=bp]{figureSuccessRate.dat};
    \addlegendentry{BP}

    \addplot table[x=s, y=omp]{figureSuccessRate.dat};
    \addlegendentry{OMP}

    \addplot table[x=s, y=sespd]{figureSuccessRate.dat};
    \addlegendentry{SESP-D}

    \addplot table[x=s, y=sespp]{figureSuccessRate.dat};
    \addlegendentry{SESP-P}

    \addplot table[x=s, y=bp_sespd]{figureSuccessRate.dat};
    \addlegendentry{BP + SESP-D}

    \addplot table[x=s, y=bp_sespp]{figureSuccessRate.dat};
    \addlegendentry{BP + SESP-P}

    \end{axis}
  \end{tikzpicture}
  \caption{Success rate versus sparsity over 100 trials in the noiseless setting. SESP-D uses 1 random initialization.}
  \label{fig:ExperimentSuccess}
\end{figure}

The recovery performance for different initialization strategies in the noiseless setting is reported in Fig.~\ref{fig:ExperimentSuccess}, alongside BP and OMP for reference. All methods achieve perfect recovery at low sparsity; as sparsity increases, SESP-D performs similarly to BP, while SESP-P outperforms it, showing that the polynomial-based formulation can surpass $\ell_1$-based recovery on its own. Warm-starting SESP-D and SESP-P from the BP solution (BP~+~SESP-D and BP~+~SESP-P) further improves performance beyond BP alone. Since BP finds the $\ell_1$ optimal solution to $\mathbf{Ax} = \mathbf{b}$ instead of the $\ell_0$, our polynomial-based refinement further sparsifies that solution.

\subsection{Regularization Parameter}
\label{sec:regularization_param}

The SESP-D formulation~\eqref{eq:elscpd_formulation} includes the regularization parameter $\lambda$, which controls the relative weight of the sparsity-promoting term in the objective.

In the noiseless setting, the success rate is insensitive to the choice of $\lambda$ over several orders of magnitude for low-to-medium sparsity. At higher sparsity, the choice of $\lambda$ is not very critical either, but quite small values of $\lambda$ tend to be preferable. They place relatively more weight on the data-fitting term compared to the sparsity-promoting term. Fig.~\ref{fig:els_noiseless} shows the success rate as a function of both $s$ and $\lambda$; for each $(\lambda, s)$ pair, $5$ random initializations are used across $20$ trials, and a trial is declared successful if at least one initialization succeeds. Yellow regions indicate successful recovery, while dark blue regions indicate failure across all $5$ initializations.

In the noisy setting, $\lambda$ has a similar effect. Fig.~\ref{fig:els_noisy20db} shows the median error over $20$ trials as a function of $s$ and $\lambda$, for additive Gaussian noise with SNR~$=20$~dB. A recovery with relative error below $10^{-1}$ is considered successful at this SNR.

Based on Fig.~\ref{fig:heatmap}, $\lambda = 10^{-4}$ is used for all experiments.

\begin{figure}[h!]
\centering
\begin{subfigure}{0.48\columnwidth}
\centering
\begin{tikzpicture}
\begin{axis}[
    width=\columnwidth,
    height=\columnwidth,
    view={0}{90},
    xlabel={Sparsity $s$},
    ylabel={$\lambda$},
    xtick={1,3,6,9,12,15},
    xticklabels={1,3,6,9,12,15},
    ytick={1,5,9,13,17},
    yticklabels={$10^{-8}$,$10^{-4}$,$10^{0}$,$10^{4}$,$10^{8}$},
    tick label style={font=\small},
    label style={font=\small},
    colormap/viridis,
    colorbar horizontal,
    colorbar style={
        at={(0.5,1.45)},
        anchor=north,
        width=0.85*\pgfkeysvalueof{/pgfplots/parent axis width},
        height=0.05*\pgfkeysvalueof{/pgfplots/parent axis width},
        xlabel={Success rate},
        label style={font=\small},
        xtick={0,0.5,1.0},
    },
    enlarge x limits=false,
    enlarge y limits=false,
]
\addplot[
    matrix plot*,
    point meta=explicit,
    mesh/cols=15,
] table[x=x, y=y, meta=z] {figureHeatmapNoiseless.dat};
\end{axis}
\end{tikzpicture}
\caption{Noiseless setting.}
\label{fig:els_noiseless}
\end{subfigure}
\hfill
\begin{subfigure}{0.48\columnwidth}
\centering
\begin{tikzpicture}
\begin{axis}[
    width=\columnwidth,
    height=\columnwidth,
    view={0}{90},
    xlabel={Sparsity $s$},
    ylabel={},
    xtick={1,3,6,9,12,15},
    xticklabels={1,3,6,9,12,15},
    ytick={1,5,9,13,17},
    yticklabels={,,,,},
    tick label style={font=\small},
    label style={font=\small},
    colormap/viridis,
    colorbar horizontal,
    colorbar style={
        at={(0.5,1.45)},
        anchor=north,
        width=0.85*\pgfkeysvalueof{/pgfplots/parent axis width},
        height=0.05*\pgfkeysvalueof{/pgfplots/parent axis width},
        xlabel={Med. rel. error},
        label style={font=\small},
        xtick={-0.4,-0.1},
        xticklabels={$0.4$,$0.1$},
    },
    enlarge x limits=false,
    enlarge y limits=false,
]
\addplot[
    matrix plot*,
    point meta=explicit,
    mesh/cols=15,
] table[x=x, y=y, meta=z] {figureHeatmapNoisy.dat};
\end{axis}
\end{tikzpicture}
\caption{Noisy setting, SNR $= 20$\,dB.}
\label{fig:els_noisy20db}
\end{subfigure}
\caption{Recovery performance of SESP-D as a function of sparsity level $s$ and regularization parameter $\lambda$.}
\label{fig:heatmap}
\end{figure}

\subsection{Recovery and Runtime in the Noiseless Case}
\label{sec:noiseless_experiment}
Recovery performance of warm-started SESP-D and SESP-P in the noiseless case is compared to BP and OMP in Figs.~\ref{fig:ExperimentTroppFigure1} and~\ref{fig:ExperimentTroppFigure2}. Fig.~\ref{fig:ExperimentTroppFigure1} displays the success rate as a function of the number of measurements $m$ for a fixed sparsity level $s$ over 100 trials. SESP-D and SESP-P require fewer measurements than BP and OMP to successfully recover an $s$-sparse signal.

\definecolor{mycolor1}{rgb}{0.000,0.447,0.741}
\definecolor{mycolor2}{rgb}{0.851,0.325,0.098}
\definecolor{mycolor3}{rgb}{0.929,0.694,0.125}

\begin{figure}[ht]
  \centering
  \begin{tikzpicture}
    \begin{axis}[
      width=1\columnwidth,
      height=0.5\columnwidth,
      xlabel={Number of measurements $m$},
      ylabel={Success Rate},
      xmin=0, xmax=64,
      ymin=0, ymax=1,
      grid=both,
      grid style={line width=0.4pt, draw=gray!30},
      major grid style={line width=0.6pt, draw=gray!50},
      tick label style={font=\small},
      label style={font=\small},
      cycle list={
        {rgb,1:red,0.000;green,0.447;blue,0.741},
        {rgb,1:red,0.851;green,0.325;blue,0.098},
        {rgb,1:red,0.929;green,0.694;blue,0.125},
      },
      legend columns=4,
      legend style={
        at={(0.5,1.04)}, anchor=south,
        draw=none,
        font=\footnotesize,
        inner sep=1pt,
        /tikz/every even column/.append style={column sep=0.22cm},
      },
      legend cell align=left,
      legend image code/.code={%
        \draw[mark repeat=2, mark phase=2, #1]
          plot coordinates {(0cm,0cm) (0.2cm,0cm) (0.4cm,0cm)};%
      },
    ]

    \addplot[color=mycolor1, line width=1pt, forget plot] table[x=m,y=bp_s1]{figureNoiselessRecoveryVsMeasurements.dat};
    \addplot[color=mycolor1, dotted, line width=1pt, forget plot] table[x=m,y=omp_s1]{figureNoiselessRecoveryVsMeasurements.dat};
    \addplot[color=mycolor1, solid, mark=star, mark size=2pt, line width=1pt, forget plot, mark options={fill=white}] table[x=m,y=els_s1]{figureNoiselessRecoveryVsMeasurements.dat};
    \addplot[color=mycolor1, solid, mark=diamond, mark size=2pt, line width=1pt, forget plot, mark options={fill=white}] table[x=m,y=null_s1]{figureNoiselessRecoveryVsMeasurements.dat};

    \addplot[color=mycolor2, line width=1pt, forget plot] table[x=m,y=bp_s2]{figureNoiselessRecoveryVsMeasurements.dat};
    \addplot[color=mycolor2, dotted, line width=1pt, forget plot] table[x=m,y=omp_s2]{figureNoiselessRecoveryVsMeasurements.dat};
    \addplot[color=mycolor2, solid, mark=star, mark size=2pt, line width=1pt, mark options={fill=white}, forget plot] table[x=m,y=els_s2]{figureNoiselessRecoveryVsMeasurements.dat};
    \addplot[color=mycolor2, solid, mark=diamond, mark size=2pt, line width=1pt, mark options={fill=white}, forget plot] table[x=m,y=null_s2]{figureNoiselessRecoveryVsMeasurements.dat};

    \addplot[color=mycolor3, line width=1pt, forget plot] table[x=m,y=bp_s3]{figureNoiselessRecoveryVsMeasurements.dat};
    \addplot[color=mycolor3, dotted, line width=1pt, forget plot] table[x=m,y=omp_s3]{figureNoiselessRecoveryVsMeasurements.dat};
    \addplot[color=mycolor3, solid, mark=star, mark size=2pt, line width=1pt, mark options={fill=white}, forget plot] table[x=m,y=els_s3]{figureNoiselessRecoveryVsMeasurements.dat};
    \addplot[color=mycolor3, solid, mark=diamond, mark size=2pt, line width=1pt, mark options={fill=white}, forget plot] table[x=m,y=null_s3]{figureNoiselessRecoveryVsMeasurements.dat};

    \addlegendimage{black, line width=1pt}
    \addlegendentry{BP}
    \addlegendimage{black, dotted, line width=1pt}
    \addlegendentry{OMP}
    \addlegendimage{black, solid, mark=star, mark size=2pt, mark options={fill=white}, line width=1pt}
    \addlegendentry{BP + SESP-D}
    \addlegendimage{black, solid, mark=diamond, mark size=2pt, mark options={fill=white}, line width=1pt}
    \addlegendentry{BP + SESP-P}

    \end{axis}

    \begin{axis}[
      at={(rel axis cs:1,0.37)}, anchor=south east,
      hide axis,
      xmin=0, xmax=1, ymin=0, ymax=1,
      width=2.2cm, height=1.6cm,
      legend style={
        draw=black, fill=white,
        font=\tiny, cells={anchor=west},
        legend columns=1,
      },
    ]
    \addlegendimage{only marks, mark=square*, mark size=1pt, color=mycolor1}
    \addlegendentry{$s=4$}
    \addlegendimage{only marks, mark=square*, mark size=1pt, color=mycolor2}
    \addlegendentry{$s=12$}
    \addlegendimage{only marks, mark=square*, mark size=1pt, color=mycolor3}
    \addlegendentry{$s=20$}
    \end{axis}

  \end{tikzpicture}
  \caption{Effect of measurement count on recovery.}
  \label{fig:ExperimentTroppFigure1}
\end{figure}

Fig.~\ref{fig:ExperimentTroppFigure2} displays the success rate as a function of the sparsity level $s$ for a fixed number of measurements $m$ over 100 trials. With $m = 32$ measurements, SESP-D and SESP-P recover a $12$-sparse signal in approximately $90\%$ of trials, compared to roughly $50\%$ for BP and OMP.

\definecolor{mycolor1}{rgb}{0.000,0.447,0.741}
\definecolor{mycolor2}{rgb}{0.851,0.325,0.098}
\definecolor{mycolor3}{rgb}{0.929,0.694,0.125}

\begin{figure}[ht]
  \centering
  \begin{tikzpicture}
    \begin{axis}[
      width=1\columnwidth,
      height=0.5\columnwidth,
      xlabel={Sparsity $s$},
      ylabel={Success Rate},
      xmin=0, xmax=40,
      ymin=0, ymax=1,
      xtick={0, 5, 10, 15, 20, 25, 30, 35, 40},
      grid=both,
      grid style={line width=0.4pt, draw=gray!30},
      major grid style={line width=0.6pt, draw=gray!50},
      tick label style={font=\small},
      label style={font=\small},
      cycle list={
        {rgb,1:red,0.000;green,0.447;blue,0.741},
        {rgb,1:red,0.851;green,0.325;blue,0.098},
        {rgb,1:red,0.929;green,0.694;blue,0.125},
      },
      legend columns=4,
      legend style={
        at={(0.5,1.04)}, anchor=south,
        draw=none,
        font=\footnotesize,
        inner sep=1pt,
        /tikz/every even column/.append style={column sep=0.22cm},
      },
      legend cell align=left,
      legend image code/.code={%
        \draw[mark repeat=2, mark phase=2, #1]
          plot coordinates {(0cm,0cm) (0.2cm,0cm) (0.4cm,0cm)};%
      },
    ]

    \addplot[color=mycolor1, line width=1pt, forget plot] table[x=k,y=bp_m1]{figureNoiselessRecoveryVsSparsity.dat};
    \addplot[color=mycolor1, dotted, line width=1pt, forget plot] table[x=k,y=omp_m1]{figureNoiselessRecoveryVsSparsity.dat};
    \addplot[color=mycolor1, solid, mark=star, mark size=2pt, line width=1pt, forget plot, mark options={fill=white}] table[x=k,y=els_m1]{figureNoiselessRecoveryVsSparsity.dat};
    \addplot[color=mycolor1, solid, mark=diamond, mark size=2pt, line width=1pt, forget plot, mark options={fill=white}] table[x=k,y=null_m1]{figureNoiselessRecoveryVsSparsity.dat};

    \addplot[color=mycolor2, line width=1pt, forget plot] table[x=k,y=bp_m2]{figureNoiselessRecoveryVsSparsity.dat};
    \addplot[color=mycolor2, dotted, line width=1pt, forget plot] table[x=k,y=omp_m2]{figureNoiselessRecoveryVsSparsity.dat};
    \addplot[color=mycolor2, solid, mark=star, mark size=2pt, line width=1pt, mark options={fill=white}, forget plot] table[x=k,y=els_m2]{figureNoiselessRecoveryVsSparsity.dat};
    \addplot[color=mycolor2, solid, mark=diamond, mark size=2pt, line width=1pt, mark options={fill=white}, forget plot] table[x=k,y=null_m2]{figureNoiselessRecoveryVsSparsity.dat};

    \addplot[color=mycolor3, line width=1pt, forget plot] table[x=k,y=bp_m3]{figureNoiselessRecoveryVsSparsity.dat};
    \addplot[color=mycolor3, dotted, line width=1pt, forget plot] table[x=k,y=omp_m3]{figureNoiselessRecoveryVsSparsity.dat};
    \addplot[color=mycolor3, solid, mark=star, mark size=2pt, line width=1pt, mark options={fill=white}, forget plot] table[x=k,y=els_m3]{figureNoiselessRecoveryVsSparsity.dat};
    \addplot[color=mycolor3, solid, mark=diamond, mark size=2pt, line width=1pt, mark options={fill=white}, forget plot] table[x=k,y=null_m3]{figureNoiselessRecoveryVsSparsity.dat};

    \addlegendimage{black, line width=1pt}
    \addlegendentry{BP}
    \addlegendimage{black, dotted, line width=1pt}
    \addlegendentry{OMP}
    \addlegendimage{black, solid, mark=star, mark size=2pt, mark options={fill=white}, line width=1pt}
    \addlegendentry{BP + SESP-D}
    \addlegendimage{black, solid, mark=diamond, mark size=2pt, mark options={fill=white}, line width=1pt}
    \addlegendentry{BP + SESP-P}

    \end{axis}

    \begin{axis}[
      at={(rel axis cs:1,1)}, anchor=north east,
      hide axis,
      xmin=0, xmax=1, ymin=0, ymax=1,
      width=2.2cm, height=1.6cm,
      legend style={
        draw=black, fill=white,
        font=\tiny, cells={anchor=west},
        legend columns=1,
      },
    ]
    \addlegendimage{only marks, mark=square*, mark size=1pt, color=mycolor1}
    \addlegendentry{$m=16$}
    \addlegendimage{only marks, mark=square*, mark size=1pt, color=mycolor2}
    \addlegendentry{$m=32$}
    \addlegendimage{only marks, mark=square*, mark size=1pt, color=mycolor3}
    \addlegendentry{$m=48$}
    \end{axis}

  \end{tikzpicture}
  \caption{Recovery with a fixed measurement count.}
  \label{fig:ExperimentTroppFigure2}
\end{figure}

Average runtime as a function of matrix size $n$ or sparsity level $s$ is reported in Fig.~\ref{fig:runtime_comparison}, with all timings measured using MATLAB timing functions and averaged over 20 trials. Since BP achieves perfect recovery for all $n$ and $s$ considered, warm-starting SESP-D and SESP-P from the BP solution would be redundant here and is omitted.

\definecolor{bpCol}{RGB}{0,114,189}
\definecolor{ompCol}{RGB}{217,83,25}
\definecolor{nullCol}{RGB}{126,47,142}
\definecolor{elsCol}{RGB}{237,177,32}

\begin{figure}[ht]
\centering
\begin{tikzpicture}
\begin{groupplot}[
    group style={
        group size=2 by 1,
        horizontal sep=0.2cm,
        y descriptions at=edge left,
    },
    width=0.575\columnwidth,
    height=0.5\columnwidth,
    grid=both,
    grid style={line width=0.4pt, draw=gray!30},
    major grid style={line width=0.6pt, draw=gray!50},
    every axis plot/.append style={line width=1pt, mark size=2pt},
    ymode=log,
    ytick={1e-3, 1e-1, 1e1, 1e3},
    ymin=1e-4, ymax=1e3,
    tick label style={font=\small},
    label style={font=\small},
    xlabel style={
        at={(axis description cs:0.5,0)},
        anchor=north,
        yshift=-0.35cm,
    },
]

\nextgroupplot[
    xmode=log,
    xlabel={Dictionary size $n$},
    ylabel={Average runtime (s)},
    legend to name=runtimelegend,
    legend columns=4,
    legend style={
        draw=none,
        font=\footnotesize,
        /tikz/every even column/.append style={column sep=0.6em},
    },
]
\addplot[color=bpCol, mark=o, mark options={fill=white}] table[x=n,y=bp_n]{figureRuntime.dat};
\addlegendentry{BP}
\addplot[color=ompCol, mark=square, mark options={fill=white}] table[x=n,y=omp_n]{figureRuntime.dat};
\addlegendentry{OMP}
\addplot[color=elsCol, mark=triangle, mark options={fill=white}] table[x=n,y=els_n]{figureRuntime.dat};
\addlegendentry{SESP-D}
\addplot[color=nullCol, mark=triangle, mark options={fill=white, rotate=180}] table[x=n,y=null_n]{figureRuntime.dat};
\addlegendentry{SESP-P}

\nextgroupplot[
    xtick={1,5,10,15,20},
    xlabel={Sparsity $s$},
]
\addplot[color=bpCol, mark=o, mark options={fill=white}, forget plot] table[x=s,y=bp_s]{figureRuntime.dat};
\addplot[color=ompCol, mark=square, mark options={fill=white}, forget plot] table[x=s,y=omp_s]{figureRuntime.dat};
\addplot[color=elsCol, mark=triangle, mark options={fill=white}, forget plot] table[x=s,y=els_s]{figureRuntime.dat};
\addplot[color=nullCol, mark=triangle, mark options={fill=white, rotate=180}, forget plot] table[x=s,y=null_s]{figureRuntime.dat};

\end{groupplot}

\node at ($(group c1r1.north)!0.5!(group c2r1.north)$)
    [anchor=south, yshift=0.05cm] {\ref{runtimelegend}};

\end{tikzpicture}

\vspace{-0.55cm}

\begin{minipage}[t]{0.15\columnwidth}
\centering
\end{minipage}%
\hspace{1.2cm}%
\begin{minipage}[t]{0.4\columnwidth}
\centering\small
\subcaption{Fixed sparsity $s=10$.}
\label{fig:runtime_n}
\end{minipage}%
\begin{minipage}[t]{0.4\columnwidth}
\centering\small
\subcaption{Fixed size $n=500$.}
\label{fig:runtime_s}
\end{minipage}

\caption{Average runtime comparison in the noiseless case.}
\label{fig:runtime_comparison}
\end{figure}

Fig.~\ref{fig:runtime_n} shows that BP, SESP-D, and SESP-P scale similarly with matrix dimension at fixed sparsity. Fig.~\ref{fig:runtime_s} shows that SESP-D runtime grows with sparsity for fixed $n$, while BP and SESP-P remain largely unaffected. The per-iteration cost of SESP-D is comparable to that of BP and SESP-P; the increase instead reflects more iterations needed to converge at higher sparsity levels.

\subsection{Recovery and Runtime in the Noisy Case}
\label{sec:noisy_experiments}

Fig.~\ref{fig:err_vs_snr} reports recovery performance under additive Gaussian noise, with 100 independent trials per SNR level and error bars indicating the 25th and 75th percentiles. The oracle estimator, which solves the least-squares problem restricted to the true support $S$, lower-bounds the achievable error for any method that must identify the support. Since SESP-D and SESP-P estimate the support before applying a support-restricted least-squares correction (Section~\ref{sec:implementation_algs}), the same debiasing step is applied to BPDN and OMP for a fair comparison.

Fig.~\ref{fig:err_vs_snr} shows that, unlike BP and OMP, SESP-D and SESP-P remain close to the oracle bound, demonstrating accurate support recovery even at higher sparsity levels. Initializing SESP-D and SESP-P with the BPDN solution yields essentially the same reconstruction accuracy as the other initialization strategies in Section~\ref{sec:implementation_algs}.


\definecolor{bpCol}{RGB}{0,114,189}
\definecolor{ompCol}{RGB}{217,83,25}
\definecolor{nullCol}{RGB}{126,47,142}
\definecolor{elsCol}{RGB}{237,177,32}

\begin{figure}[ht]
\centering
\begin{tikzpicture}
\begin{groupplot}[
    group style={
        group size=3 by 1,
        horizontal sep=0.2cm,
        y descriptions at=edge left,
    },
    width=0.44\columnwidth,
    height=0.45\columnwidth,
    ymode=log,
    ymin=5*1e-4, ymax=1e0,
    xmin=5, xmax=65,
    grid=both,
    minor grid style={gray!15},
    major grid style={gray!25},
    error bars/y dir=both,
    error bars/y explicit,
    error bars/error bar style={line width=1pt},
    error bars/error mark options={
        line width=6pt,
        mark size=0.5pt,
        },
    every axis plot/.append style={line width=1pt, mark size=2pt},
    label style={font=\small},
    tick label style={font=\small},
    title style={font=\small},
    xlabel={SNR (dB)},
    xtick={20,40,60},
    xticklabel style={font=\small},
    yticklabel style={font=\small},
    legend image code/.code={%
      \draw[mark repeat=2, mark phase=2, #1]
        plot coordinates {(0cm,0cm) (0.15cm,0cm) (0.3cm,0cm)};%
    },
]

\nextgroupplot[
    title={$s = 10$},
    ylabel={Relative error},
]
\addplot[bpCol,   mark=o, mark options={fill=white}]        table[x=snr, y=bp_med_k10,
    y error plus=bp_errhi_k10, y error minus=bp_errlo_k10] {figureErrorVsSNR.dat};

\addplot[ompCol,  mark=square, mark options={fill=white}]  table[x=snr, y=omp_med_k10,
    y error plus=omp_errhi_k10, y error minus=omp_errlo_k10] {figureErrorVsSNR.dat};

\addplot[elsCol,  mark=triangle, mark options={fill=white}] table[x=snr, y=els_med_k10,
    y error plus=els_errhi_k10, y error minus=els_errlo_k10] {figureErrorVsSNR.dat};

\addplot[nullCol, mark=triangle, mark options={fill=white, rotate=180}] table[x=snr, y=null_med_k10,
    y error plus=null_errhi_k10, y error minus=null_errlo_k10] {figureErrorVsSNR.dat};

\addplot[black, dashed, mark=none, line width=0.9pt] table[x=snr, y=oracle_med_k10,
    y error plus=oracle_errhi_k10, y error minus=oracle_errlo_k10] {figureErrorVsSNR.dat};

\nextgroupplot[
    title={$s = 12$},
    yticklabel=\empty,
    legend columns=5,
    legend style={
        at={(0.5,1.5)}, anchor=north,
        font=\footnotesize,
        draw=none,
        inner sep=0pt,
        fill=none,
        /tikz/every even column/.append style={column sep=1pt},
    },
    legend cell align=left,
]
\addplot[bpCol,   mark=o, mark options={fill=white}]        table[x=snr, y=bp_med_k12,
    y error plus=bp_errhi_k12, y error minus=bp_errlo_k12] {figureErrorVsSNR.dat};
\addlegendentry{BPDN}
\addplot[ompCol,  mark=square, mark options={fill=white}]  table[x=snr, y=omp_med_k12,
    y error plus=omp_errhi_k12, y error minus=omp_errlo_k12] {figureErrorVsSNR.dat};
\addlegendentry{OMP}
\addplot[elsCol,  mark=triangle, mark options={fill=white}] table[x=snr, y=els_med_k12,
    y error plus=els_errhi_k12, y error minus=els_errlo_k12] {figureErrorVsSNR.dat};
\addlegendentry{SESP-D}
\addplot[nullCol, mark=triangle, mark options={fill=white, rotate=180}] table[x=snr, y=null_med_k12,
    y error plus=null_errhi_k12, y error minus=null_errlo_k12] {figureErrorVsSNR.dat};
\addlegendentry{SESP-P}
\addplot[black, dashed, mark=none, line width=0.9pt] table[x=snr, y=oracle_med_k12,
    y error plus=oracle_errhi_k12, y error minus=oracle_errlo_k12] {figureErrorVsSNR.dat};
\addlegendentry{Oracle LS}

\nextgroupplot[
    title={$s = 14$},
    yticklabel=\empty,
]
\addplot[bpCol,   mark=o, mark options={fill=white}]        table[x=snr, y=bp_med_k14,
    y error plus=bp_errhi_k14, y error minus=bp_errlo_k14] {figureErrorVsSNR.dat};
\addplot[ompCol,  mark=square, mark options={fill=white}]  table[x=snr, y=omp_med_k14,
    y error plus=omp_errhi_k14, y error minus=omp_errlo_k14] {figureErrorVsSNR.dat};
\addplot[elsCol,  mark=triangle, mark options={fill=white}] table[x=snr, y=els_med_k14,
    y error plus=els_errhi_k14, y error minus=els_errlo_k14] {figureErrorVsSNR.dat};
\addplot[nullCol, mark=triangle, mark options={fill=white, rotate=180}] table[x=snr, y=null_med_k14,
    y error plus=null_errhi_k14, y error minus=null_errlo_k14] {figureErrorVsSNR.dat};
\addplot[black, dashed, mark=none, line width=0.9pt] table[x=snr, y=oracle_med_k14,
    y error plus=oracle_errhi_k14, y error minus=oracle_errlo_k14] {figureErrorVsSNR.dat};

\end{groupplot}
\end{tikzpicture}
\caption{Median relative error versus SNR. Error bars indicate the 25th and 75th percentiles over 100 trials. The oracle lower bound corresponds to least-squares on the true support.}
\label{fig:err_vs_snr}
\end{figure}

The improved robustness to measurement noise comes at increased computational cost: Fig.~\ref{fig:runtime_comparison_noisy} shows SESP-D and SESP-P requiring longer runtimes than BPDN and OMP, with a larger gap than in the noiseless case (Fig.~\ref{fig:runtime_comparison}).

\definecolor{bpCol}{RGB}{0,114,189}
\definecolor{ompCol}{RGB}{217,83,25}
\definecolor{sesppCol}{RGB}{126,47,142}
\definecolor{sespdCol}{RGB}{237,177,32}

\begin{figure}[ht]
\centering
\begin{tikzpicture}
\begin{groupplot}[
    group style={
        group size=2 by 1,
        horizontal sep=0.2cm,
        y descriptions at=edge left,
    },
    width=0.575\columnwidth,
    height=0.5\columnwidth,
    grid=both,
    grid style={line width=0.4pt, draw=gray!30},
    major grid style={line width=0.6pt, draw=gray!50},
    every axis plot/.append style={line width=1pt, mark size=2pt},
    ymode=log,
    ytick={1e-3, 1e-1, 1e1, 1e3},
    ymin=1e-4, ymax=1e3,
    tick label style={font=\small},
    label style={font=\small},
    xlabel style={
        at={(axis description cs:0.5,0)},
        anchor=north,
        yshift=-0.35cm,
    },
]

\nextgroupplot[
    xmode=log,
    xlabel={Dictionary size $n$},
    ylabel={Average runtime (s)},
    legend to name=runtimelegendnoisy,
    legend columns=4,
    legend style={
        draw=none,
        font=\footnotesize,
        /tikz/every even column/.append style={column sep=0.6em},
    },
]
\addplot[color=bpCol, mark=o, mark options={fill=white}] table[x=n,y=bp_n]{figureRuntimeNoisy.dat};
\addlegendentry{BPDN}
\addplot[color=ompCol, mark=square, mark options={fill=white}] table[x=n,y=omp_n]{figureRuntimeNoisy.dat};
\addlegendentry{OMP}
\addplot[color=sespdCol, mark=triangle, mark options={fill=white}] table[x=n,y=els_n]{figureRuntimeNoisy.dat};
\addlegendentry{SESP-D}
\addplot[color=sesppCol, mark=triangle, mark options={fill=white, rotate=180}] table[x=n,y=null_n]{figureRuntimeNoisy.dat};
\addlegendentry{SESP-P}

\nextgroupplot[
    xtick={1,5,10,15,20},
    xlabel={Sparsity $s$},
]
\addplot[color=bpCol, mark=o, mark options={fill=white}, forget plot] table[x=s,y=bp_s]{figureRuntimeNoisy.dat};
\addplot[color=ompCol, mark=square, mark options={fill=white}, forget plot] table[x=s,y=omp_s]{figureRuntimeNoisy.dat};
\addplot[color=sespdCol, mark=triangle, mark options={fill=white}, forget plot] table[x=s,y=els_s]{figureRuntimeNoisy.dat};
\addplot[color=sesppCol, mark=triangle, mark options={fill=white, rotate=180}, forget plot] table[x=s,y=null_s]{figureRuntimeNoisy.dat};

\end{groupplot}

\node at ($(group c1r1.north)!0.5!(group c2r1.north)$)
    [anchor=south, yshift=0.05cm] {\ref{runtimelegendnoisy}};

\end{tikzpicture}

\vspace{-0.55cm}

\begin{minipage}[t]{0.15\columnwidth}
\centering
\end{minipage}%
\hspace{1.2cm}%
\begin{minipage}[t]{0.4\columnwidth}
\centering\small
\subcaption{Fixed sparsity $s=10$}
\label{fig:runtime_noisy_n}
\end{minipage}%
\begin{minipage}[t]{0.4\columnwidth}
\centering\small
\subcaption{Fixed size $n=500$}
\label{fig:runtime_noisy_s}
\end{minipage}

\caption{Average runtime comparison in the noisy case.}
\label{fig:runtime_comparison_noisy}
\end{figure}

Runtimes for warm-starting SESP-D and SESP-P from the BPDN solution are omitted in Fig.~\ref{fig:runtime_comparison_noisy}, since BPDN alone already achieves sufficient accuracy for all $n$ and $s$ considered. In regimes where BPDN accuracy is insufficient, however, it can serve as an inexpensive initialization: a low-cost BPDN estimate is first computed, then refined with SESP-D or SESP-P, trading additional computation for improved accuracy.

\subsection{Sparse Modeling with Equivalent Representations}
\label{sec:sparse_modeling_experiment}

We next illustrate the ability of the proposed methods to recover multiple equivalent sparse representations of the same signal while explicitly controlling the target sparsity level.

The proposed methods may return sparse solutions with equal sparsity but different supports. Hence, performance is no longer evaluated via the relative error on $\hat{\mathbf{x}}$. Instead, a computed solution $\hat{\mathbf{x}}$ is declared successful if it satisfies $\|\mathbf{A}\hat{\mathbf{x}} - \mathbf{b}\|_2 < 10^{-10}$ and $\|\hat{\mathbf{x}}\|_0 \le s$, where entries with $|\hat{x}_i|<10^{-8}$ are treated as zero when computing $\|\hat{\mathbf{x}}\|_0$.

Let $\mathbf{t}$ be the time vector with entries $t_i=i$, for $i=0,1,\dots,10$. We construct
the matrix
\[
\mathbf{A} = \begin{bmatrix}
\mathbf{1} & \mathbf{t} & \mathbf{t}-\mathbf{1} & \mathbf{t}+\mathbf{1} & \mathbf{t}^{\cdot 2} & \mathbf{t}^{\cdot 3}
\end{bmatrix},
\]
where $\mathbf{t}^{\cdot k}$ denotes the entrywise power and $\mathbf{1}$ is the all-ones vector.
Since $\mathbf{A}$ has linearly dependent columns, the same signal may admit multiple sparse
representations with different supports and different sparsity levels.

Consider the signal $\mathbf{b}= \tfrac{2}{3}\cdot\mathbf{1} + \tfrac{1}{3}\cdot \mathbf{t}^{\cdot 3}$, which admits the
equivalent representations
\begin{align*}
\hat{\mathbf{x}}_1 &= \begin{bmatrix} \tfrac{2}{3}&0&0&0&0&\tfrac{1}{3} \end{bmatrix}^\top, 
&\|\hat{\mathbf{x}}_1\|_0 = 2, \\
\hat{\mathbf{x}}_2 &= \begin{bmatrix} 0&0&-\tfrac{1}{3}&\tfrac{1}{3}&0&\tfrac{1}{3} \end{bmatrix}^\top, 
&\|\hat{\mathbf{x}}_2\|_0 = 3, \\
\hat{\mathbf{x}}_3 &= \begin{bmatrix} 0&\tfrac{2}{3}&-\tfrac{2}{3}&0&0&\tfrac{1}{3} \end{bmatrix}^\top, 
&\|\hat{\mathbf{x}}_3\|_0 = 3, \\
\hat{\mathbf{x}}_4 &= \begin{bmatrix} 0&-\tfrac{2}{3}&0&\tfrac{2}{3}&0&\tfrac{1}{3} \end{bmatrix}^\top, 
&\|\hat{\mathbf{x}}_4\|_0 = 3.
\end{align*}
BP seeks a single minimum-$\ell_1$ solution and cannot find multiple sparse representations. OMP follows a deterministic greedy path and is similarly limited to a single solution. In contrast, SESP-D and SESP-P recover all four solutions.

SESP-D and SESP-P require specifying a target sparsity level $s$. The multilinear sparsity constraints define an algebraic variety containing all vectors with $\|\mathbf{x}\|_0 \le s$; consequently, even when the target sparsity exceeds the true sparsity, lower-sparsity solutions remain recoverable. In this example, setting $s=2$ always yields $\hat{\mathbf{x}}_1$, while setting $s=3$ yields $\hat{\mathbf{x}}_2$, $\hat{\mathbf{x}}_3$, or $\hat{\mathbf{x}}_4$, but also $\hat{\mathbf{x}}_1$, depending on the initialization.

SESP-D is initialized with a vector drawn from the normal distribution $\mathbf{x}_0 \sim \mathcal{N}(\mathbf{0}, \gamma^2 \mathbf{I}_n)$, where $\gamma > 0$. SESP-P is initialized with $\mathbf{y}_0 = \mathbf{V}^\top (\mathbf{x}_0-\mathbf{x}_p)$. We vary $\gamma$ to examine how initialization scale affects the recovery frequency of each of the four solutions, testing 1000 random initializations per value of $\gamma$ and reporting the resulting percentages in Fig.~\ref{fig:multiple_solutions}. The results show that even when the target sparsity $s$ overestimates the true minimal sparsity, the recovered solution is most often the one of smaller sparsity ($\hat{\mathbf{x}}_1$) rather than the higher-sparsity alternatives ($\hat{\mathbf{x}}_2$, $\hat{\mathbf{x}}_3$, $\hat{\mathbf{x}}_4$).

\begin{figure}[ht]
\centering
\begin{tikzpicture}
\begin{groupplot}[
    group style={
        group size=2 by 1,
        horizontal sep=1cm,
    },
    ybar stacked,
    width=0.52\columnwidth, height=0.37\columnwidth,
    ymin=0, ymax=1,
    symbolic x coords={10,3,1,0.3,0.1},
    xtick=data,
    xlabel={$\gamma$},
    tick label style={font=\small},
    label style={font=\small},
    title style={font=\footnotesize},
    ]

\nextgroupplot[
    title={SESP-D},
    ylabel={\shortstack{Recovery\\frequency}},
    ylabel style={align=center},
    legend to name={fig:recovery_freq_legend},
    legend style={legend columns=1, font=\footnotesize, draw=none},
    ]
\addplot coordinates {(10,0.541) (3,0.529) (1,0.518) (0.3,0.550) (0.1,0.602)};
\addplot coordinates {(10,0.248) (3,0.294) (1,0.344) (0.3,0.340) (0.1,0.374)};
\addplot coordinates {(10,0.108) (3,0.082) (1,0.066) (0.3,0.043) (0.1,0.010)};
\addplot coordinates {(10,0.103) (3,0.095) (1,0.072) (0.3,0.067) (0.1,0.014)};
\legend{$\hat{\mathbf{x}}_1$,$\hat{\mathbf{x}}_2$,$\hat{\mathbf{x}}_3$,$\hat{\mathbf{x}}_4$}

\nextgroupplot[
    title={SESP-P},
    yticklabels={},
    ]
\addplot coordinates {(10,0.467) (3,0.425) (1,0.481) (0.3,0.514) (0.1,0.526)};
\addplot coordinates {(10,0.213) (3,0.209) (1,0.318) (0.3,0.486) (0.1,0.474)};
\addplot coordinates {(10,0.162) (3,0.186) (1,0.108) (0.3,0.0)   (0.1,0.0)};
\addplot coordinates {(10,0.158) (3,0.180) (1,0.093) (0.3,0.0)   (0.1,0.0)};

\end{groupplot}

\node[anchor=center] at ($(group c1r1.east)!0.5!(group c2r1.west)$)
    {\ref{fig:recovery_freq_legend}};

\end{tikzpicture}
\caption{Recovery frequency by component for SESP-D (left) and SESP-P (right) as a function of $\gamma$.}
\label{fig:multiple_solutions}
\end{figure}

The Macaulay approach is also evaluated on this example. Unlike the previous experiments, where its computational cost rendered it intractable, the limited problem size here makes it feasible. In contrast to SESP-D and SESP-P, the Macaulay construction is formally guaranteed to recover all solutions at once. Specifically, $\hat{\mathbf{x}}_2$, $\hat{\mathbf{x}}_3$, and $\hat{\mathbf{x}}_4$ are recovered each with multiplicity one, and $\hat{\mathbf{x}}_1$ with multiplicity three.

The tensor constructed in \eqref{eq:cpd_null} is decomposed using the CPD-NLS algorithm in Tensorlab with GEVD-based initialization. Function and step tolerances are set to $\varepsilon^2$ and $\varepsilon$, respectively, and a maximum of $500$ iterations is used. The solutions at infinity that the method finds are discarded.

\section{Conclusion and Future Work}
We reformulated sparse approximation as a structured system of polynomial equations. The Macaulay construction is guaranteed to algebraically recover all $s$-sparse solutions at once, but is tractable only for small problems. We also proposed two optimization-based methods, SESP-D and \mbox{SESP-P}, which exploit the structure of the sparsity constraints {via inexpensive ESP estimation and} keep the per-iteration cost comparable to that of BP.

In our experiments, SESP-D and SESP-P recovered sparse signals from fewer measurements than BP and OMP, succeeded at higher sparsity levels, and remained close to the oracle bound under additive noise. This accuracy comes at a higher computational cost, which is most pronounced in the noisy setting. When BP or BPDN already succeeds, it remains the cheaper choice; when it does not, its solution can serve as an inexpensive initialization for the proposed methods.

The formulation offers two further properties {of interest}. First, several equivalent sparse representations of the same signal can be recovered, which neither BP nor OMP provides. Second, unlike in BP and BPDN, where sparsity is obtained indirectly through a regularization parameter, the target sparsity is specified directly and acts as an upper bound rather than an exact requirement.

Several directions remain open. First, the connection with EVD and CPD makes it in principle possible to derive upper bounds on the estimation error in the noisy case. Second, our approach extends to constrained sparse approximation, since any constraint that can be written polynomially can be appended to the system. Examples of such constrained variants are: the unit-norm constraint, structured sparsity patterns, and nonnegativity. Third, the approach may be extended to parametrized dictionaries if the dependence on the unknown parameters is polynomial, as those parameters may be treated as additional variables. Fourth, faster Macaulay solvers would make the simultaneous recovery of all solutions practical for larger problems.

\appendices

\section{Structure of $\mathbf{Z}^\top \mathbf{Z}$}
\label{appx:structure_M}
 
Let $[n] = \{1,2,\dots,n\}$. We index the standard basis of $\mathbb{R}^{n^{s+1}}$ by multi-indices
$(i_1,\dots,i_{s+1}) \in [n]^{s+1},$
with corresponding basis vectors
$\mathbf{e}_{i_1,\dots,i_{s+1}} = \mathbf{e}_{i_1} \kron \mathbf{e}_{i_2} \kron \cdots \kron \mathbf{e}_{i_{s+1}}$
where $\mathbf{e}_{i_j}$ are basis vectors of $\mathbb{R}^n.$

Since $\mathbf{Z}$ consists of one-hot rows, the matrix $\mathbf{Z}^\top \mathbf{Z}$ is diagonal with diagonal entries in $\{0,1\}$.
The matrix $\mathbf{Z}^\top \mathbf{Z}$ acts as a projector onto the subspace of $\mathbb{R}^{n^{s+1}}$ spanned by basis vectors whose indices are pairwise distinct. Hence
\[
\mathbf{Z}^\top \mathbf{Z}
=
\sum_{\substack{(i_1,\dots,i_{s+1}) \in [n]^{s+1}\\
i_1 \neq \cdots \neq i_{s+1}}}
\mathbf{e}_{i_1,\dots,i_{s+1}}
\mathbf{e}_{i_1,\dots,i_{s+1}}^\top .
\]
Equivalently, we obtain the Kronecker representation
\begin{equation}
\label{eq:structure_MTM}
\mathbf{Z}^\top \mathbf{Z}
=
\sum_{\substack{(i_1,\dots,i_{s+1}) \in [n]^{s+1}\\
i_1 \neq \cdots \neq i_{s+1}}}
\mathbf{e}_{i_1} \mathbf{e}_{i_1}^\top
\kron
\cdots
\kron
\mathbf{e}_{i_{s+1}} \mathbf{e}_{i_{s+1}}^\top .
\end{equation}

\section{SESP-D Derivation}
\label{appx:elscpd_derivation}

{\emph{Objective Function}}
The objective function $f$ is obtained from the residual~$\mathbf{r}$ as $f = \frac{1}{2} \mathbf{r}^\top \mathbf{r}$.
From the formulation of SESP-D~\eqref{eq:elscpd_formulation}, it follows
\begin{equation}
\label{eq:residual}
    \mathbf{r} = \begin{bmatrix} \mathbf{r}^{(1)} \\ \lambda \mathbf{r}^{(2)} \end{bmatrix} = \begin{bmatrix} \mathbf{Ax} - \mathbf{b} \\ \lambda \mathbf{Z} \left(\kron_{i=1}^{s+1} \mathbf{x} \right) \end{bmatrix}.
\end{equation}
Thus, $f = \frac{1}{2} \| \mathbf{Ax} - \mathbf{b} \|_2^2 + \frac{1}{2} \lambda^2 \mathbf{r}^{(2)\top} \mathbf{r}^{(2)}$.

Using the mixed-product property of the Kronecker product, we obtain
$(\mathbf{x} \kron \mathbf{x})^\top (\mathbf{e}_i \mathbf{e}_i^\top \kron \mathbf{e}_j \mathbf{e}_j^\top) (\mathbf{x} \kron \mathbf{x}) =
(\mathbf{x} ^\top \mathbf{e}_i \mathbf{e}_i^\top\mathbf{x}) \kron (\mathbf{x} ^\top \mathbf{e}_j \mathbf{e}_j^\top\mathbf{x}) 
=  x_i^2 x_j^2$
Combining this identity with the structure of $\mathbf{Z}^\top \mathbf{Z}$ in~\eqref{eq:structure_MTM} yields
\begin{align*}
\mathbf{r}^{(2)\top} \mathbf{r}^{(2)}
&=
\lambda^2 \left(\kron_{k=1}^{s+1} \mathbf{x}\right)^\top
\mathbf{Z}^\top \mathbf{Z}
\left(\kron_{l=1}^{s+1} \mathbf{x}\right) \\
&=
\lambda^2 \sum_{\substack{(i_1,\dots,i_{s+1}) \in [n]^{s+1}\\
i_1 \neq \ldots \neq i_{s+1} }}
\prod_{k=1}^{s+1} x_{i_k}^2 .
\end{align*}

Each ordered tuple $(i_1,\dots,i_{s+1})$ with distinct elements corresponds to a permutation of a subset $S \subset [n]$ with $|S| = s+1$. For each subset there are $(s+1)!$ such permutations. Grouping terms by subsets therefore yields
\begin{equation*}
\mathbf{r}^{(2)\top} \mathbf{r}^{(2)}
= (s+1)! \sum_{\substack{S \subset [n] \\ |S|=s+1}} \prod_{i \in S} x_i^2 .
\end{equation*}

Expanding the sum over subsets gives
\begin{align*}
\sum_{\substack{S \subset [n] \\ |S|=s+1}}
\prod_{i \in S} x_i^2
&=
\sum_{ \substack{1 \le i_1 <\cdots \\ \cdots < i_{s+1} \le n}}
x_{i_1}^2 \cdots x_{i_{s+1}}^2 \overset{\text{Def.~\ref{def:elem_sym_poly}}}{=}
E_{s+1}.
\end{align*}
Therefore, the expression for the objective function is
\begin{equation*}
f
=
\frac{1}{2}\|\mathbf{Ax}-\mathbf{b}\|_2^2
+
\frac{1}{2}\lambda^2(s+1)! \,
E_{s+1}.
\end{equation*}

{\emph{Gradient:}}
The gradient splits as
\[
\mathbf{J}^\top \mathbf{r} = \nabla f =
\frac12 \nabla \| \mathbf{Ax} - \mathbf{b} \|_2^2 
+ \frac12 \lambda^2 (s+1)! \nabla E_{s+1} 
\]
where $\nabla \| \mathbf{Ax} - \mathbf{b} \|_2^2  = 2 \mathbf{A}^\top (\mathbf{Ax} - \mathbf{b})$, and $\nabla E_{s+1}$ is obtained using the fact that $ \frac{\partial}{\partial x_i} E_{k+1}= 2x_i E_k^{(i)}$. Thus, $\nabla E_{s+1} = 2 \sum_{i=1}^n E_s^{(i)} x_i \mathbf{e}_i $.

{\emph{Gramian of the Jacobians:}}
The Gramian of the Jacobians splits as
\begin{equation*}
\mathbf{J}^\top \mathbf{J}
=
\mathbf{A}^\top \mathbf{A}
+ \lambda^2 \mathbf{J}^{(2)\top}\mathbf{J}^{(2)},
\end{equation*}
where $\mathbf{J}^{(1)} = \mathbf{A}$ follows from $\mathbf{r}^{(1)} = \mathbf{Ax} - \mathbf{b}$,
and $\mathbf{J}^{(2)}$ denotes the Jacobian of
$\mathbf{r}^{(2)}$~\eqref{eq:residual}. We derive $\mathbf{J}^{(2)\top}\mathbf{J}^{(2)}$ as follows.

The entries of $\mathbf{r}^{(2)}$ are indexed by ordered tuples
$(i_1,\ldots,i_{s+1}) \in [n]^{s+1}$ with pairwise distinct indices, and the
corresponding entry is $r_{(i_1,\ldots,i_{s+1})}(\mathbf{x}) = x_{i_1} \cdots x_{i_{s+1}}$. Every permutation of $S \subset [n]$ with $|S| = s+1$ yields the same value
\begin{equation*}
r_S(\mathbf{x}) = \prod_{i \in S} x_i,
\end{equation*}
so each subset $S$ contributes exactly $(s+1)!$ entries to $\mathbf{r}^{(2)}$,
all equal to $r_S(\mathbf{x})$ and hence all sharing the same gradient
$\nabla r_S$.
Using $\mathbf{J}^{(2)\top}\mathbf{J}^{(2)} = \sum_k (\nabla r_k^{(2)})(\nabla r_k^{(2)})^\top$
and grouping terms by subset gives
\begin{equation}
\label{eq:JTJ2}
\mathbf{J}^{(2)\top}\mathbf{J}^{(2)}
=
(s+1)! \sum_{\substack{S \subset [n] \\ |S|=s+1}} (\nabla r_S)(\nabla r_S)^\top.
\end{equation}
Since $\nabla r_S = \sum_{i \in S} \prod_{k \in S \setminus \{i\}} x_k \mathbf{e}_i,$, it follows that
\begin{equation*}
(\nabla r_S)(\nabla r_S)^\top = \sum_{i,j \in S} \prod_{k \in S \setminus \{i\}} x_k \prod_{l \in S \setminus \{j\}} x_l \mathbf{e}_i \mathbf{e}_j^\top.
\end{equation*}

If $i = j$, the coefficient simplifies to $\prod_{k \in S \setminus \{i\}} x_k^2$. Summing over all subsets $S$ in \eqref{eq:JTJ2} yields $E_s^{(i)}$ at the position $\mathbf{e}_i \mathbf{e}_i^\top$ in the matrix $\mathbf{J}^{(2)\top}\mathbf{J}^{(2)}$.

If $i \neq j$, the coefficient factorises as $x_i x_j \prod_{k \in S \setminus \{i,j\}} x_k^2$. Summing over all subsets $S$ in \eqref{eq:JTJ2} yields $x_i x_j E_{s-1}^{(i,j)}$ at the positions $\mathbf{e}_i \mathbf{e}_j^\top$ and $\mathbf{e}_j \mathbf{e}_i^\top$ in the matrix $\mathbf{J}^{(2)\top}\mathbf{J}^{(2)}$.

Therefore,
\begin{equation*}
\mathbf{J}^{(2)\top}\mathbf{J}^{(2)}
= (s+1)!
\sum_{i=1}^n
E_s^{(i)}\, \mathbf{e}_i \mathbf{e}_i^\top
+
\sum_{\substack{i,j=1 \\ i \neq j}}^n
E_{s-1}^{(i,j)}\, x_i x_j\,
\mathbf{e}_i \mathbf{e}_j^\top.
\end{equation*}

\section{SESP-P Derivation}
\label{appx:nullspace_derivation}
The residual
\(
\mathbf{r} = \mathbf{Z} \left( \kron_{i=1}^{s+1} \mathbf{x} \right) = \mathbf{Z} \left( \kron_{i=1}^{s+1} (\mathbf{x}_p + \mathbf{Vy}) \right)
\)
is of the same form as \( \mathbf{r}^{(2)} \) in \eqref{eq:residual}. Thus, the objective
\begin{equation*}
f = \frac{1}{2} (s+1)! \, E_{s+1}
\end{equation*}
is obtained in the same way as in Appendix~\ref{appx:elscpd_derivation}.

An important difference is that SESP-P optimizes over \( \mathbf{y} \) instead of \( \mathbf{x} \). Since $\mathbf{x} = \mathbf{x}_p + \mathbf{V}\mathbf{y}$, the Jacobian of $\mathbf{x}$ with respect to $\mathbf{y}$ is $\frac{\partial \mathbf{x}}{\partial \mathbf{y}} = \mathbf{V}$.
Thus, using the chain rule,
\begin{align*}
    \mathbf{J}^\top \mathbf{r} &=
    \left(\frac{\partial \mathbf{x}}{\partial \mathbf{y}}\right)^\top \mathbf{J}^\top_\mathbf{x} \mathbf{r}
    =(s+1)! \, \mathbf{V}^\top \sum_{i=1}^n E_s^{(i)} x_i  \mathbf{e}_i, \\
    \mathbf{J}^\top \mathbf{J} &= (s+1)! \mathbf{V}^\top \Big( \sum_i E_s^{(i)} \mathbf{e}_i \mathbf{e}_i^\top + \sum_{i,j} E_{s-1}^{(i,j)} x_i x_j  \mathbf{e}_i \mathbf{e}_j^\top \Big) \mathbf{V}.
\end{align*}

\section{Proof of Lemma~\ref{lemma:esp_recursion}}
\label{appx:esp_recursion}
\begin{proof}
    Partition $e_d(z_1,\ldots, z_k)$, with $d \leq k$, into two parts: one that does not contain $z_k$ and one that does. The former is $e_d(z_1,\ldots,z_{k-1})=\sum_{1 \le i_1 < \cdots < i_{d} \le k-1} z_{i_1} \cdots z_{i_{d}}$. The latter is $z_k e_{d-1}(z_1,\ldots,z_{k-1})= $ $\sum_{1 \le i_1 < \cdots < i_{d-1} \le k-1} z_{i_1} \cdots z_{i_{d-1}} z_k$.
    Therefore, for every $ j = 1, \ldots, \min(k,d) $,
    \begin{equation}
    \label{eq:recursion}
        e_j(z_1,\ldots, z_k)= e_j(z_1,\ldots,z_{k-1}) + z_k e_{j-1}(z_1,\ldots,z_{k-1}).
    \end{equation}
    The recursion is initialized by $e_0(z_1,\ldots z_k) = 1$. Evaluating the recursion for $k=2,\ldots,n$ requires $\mathcal{O}(nd)$ operations and yields $e_d(z_1,\ldots,z_n)$.
\end{proof}

\section{Proof of Lemma~\ref{lemma:leave_one_out}}
\label{appx:leave_one_out}
\begin{proof}
    Compute left and right partial ESPs, for $k = 0, \ldots, s,$
    \begin{align*}
        E_k^{\mathrm{L}}(i) := e_k\!\left(x_1^2, \ldots, x_i^2\right), \quad
        E_k^{\mathrm{R}}(i) := e_k\!\left(x_i^2, \ldots, x_n^2\right), 
    \end{align*}
    with boundary conventions $E_0^{\mathrm{L}}(0) = 1$ and $E_0^{\mathrm{R}}(n+1) = 1$. Both tables are computed via~\eqref{eq:recursion}, sweeping left-to-right and right-to-left respectively, each in $O(ns)$ time.
    
    Partition each size-$s$ subset of $\{1,\ldots,n\}\setminus\{i\}$ according to how many elements lie strictly to the left versus strictly to the right of $i$. A subset contributing $j$ elements from $\{1,\ldots,i-1\}$ and $s-j$ elements from $\{i+1,\ldots,n\}$ contributes a term $E_j^{\mathrm{L}}(i-1)\, E_{s-j}^{\mathrm{R}}(i+1)$ to $E_s^{(i)}$. Thus, summing over $j$ gives
    \begin{equation*}
        E_s^{(i)} = \sum_{j=0}^{s} E_j^{\mathrm{L}}(i-1)\, E_{s-j}^{\mathrm{R}}(i+1),
    \end{equation*}
    which costs $O(s)$. Evaluating all $n$ values costs $O(ns)$. Together with the $O(ns)$ left and right table construction, the total complexity is $O(ns)$.
\end{proof}

\section{Proof of Lemma~\ref{lemma:leave_two_out}}
\label{appx:leave_two_out}
\begin{proof}
Fix $i \in \{1, \ldots, n\}$ and let $S^{(i)} := \{x^2_1, \ldots, x^2_n\} \setminus \{x^2_i\}$ denote the set of $n-1$ variables with $x^2_i$ removed. By Lemma~\ref{lemma:leave_one_out} applied to $S^{(i)}$, all $n-1$ values 
$E_{s-1}^{(i,j)}$, $j \neq i$, can be computed simultaneously in $\mathcal{O}(ns)$ time. Repeating for each $i = 1, \ldots, n$ yields all $\binom{n}{2}$ values $E_{s-1}^{(i,j)}$ in $\mathcal{O}(n^2 s)$ time in total.
\end{proof}

\section*{Acknowledgment}

This work was supported by the Flemish Government's AI Research Program and KU Leuven Internal Funds (iBOF/23/064, C14/22/096).

\ifCLASSOPTIONcaptionsoff
  \newpage
\fi

\bibliographystyle{IEEEtran}  
\bibliography{IEEEabrv,strings,references}

@STRING{IEEE_J_SP         = "{IEEE} Trans. Signal Process."}

@STRING{IEEE_J_IT         = "{IEEE} Trans. Inf. Theory"}

@STRING{IEEE_J_IP         = "{IEEE} Trans. Image Process."}

@STRING{IEEE_J_PAMI       = "{IEEE} Trans. Pattern Anal. Mach. Intell."}

@article{macaulay_poly,
    author = {Vanderstukken, Jeroen and De Lathauwer, Lieven},
    title = {Systems of Polynomial Equations, Higher-order Tensor Decompositions, and Multidimensional Harmonic Retrieval: A Unifying Framework. {Part I}: The Canonical Polyadic Decomposition},
    journal = J_SIAM_MAA,
    volume = {42},
    number = {2},
    pages = {883-912},
    year = {2021},
    doi = {10.1137/17M1150050}}

@article{macaulay_matrix,
author = {Kim Batselier and Philippe Dreesen and Bart {De Moor}},
title = {On the null spaces of the {Macaulay} matrix},
journal = J_LAA,
volume = {460},
pages = {259-289},
year = {2014},
issn = {0024-3795},
doi = {https://doi.org/10.1016/j.laa.2014.07.035}}

@article{bousse_LS_CPD,
    author = {Boussé, M. and Vervliet, N. and Domanov, I. and Debals, O. and De Lathauwer, L.},
    title = {Linear systems with a canonical polyadic decomposition constrained solution: Algorithms and applications},
    journal = J_NLAA,
    volume = {25},
    number = {6},
    pages = {1--18},
    year = {2018},
    month = {Aug.},
    doi = {https://doi.org/10.1002/nla.2190}}

@article{kolda_overview,
    author = {Kolda, Tamara G. and Bader, Brett W.},
    title = {Tensor Decompositions and Applications},
    journal = J_SIAM_REV,
    volume = {51},
    number = {3},
    pages = {455-500},
    year = {2009},
    doi = {10.1137/07070111X},
    eprint = { https://doi.org/10.1137/07070111X}}

@article{nikos_overview,
    author={Sidiropoulos, Nicholas D. and De Lathauwer, Lieven and Fu, Xiao and Huang, Kejun and Papalexakis, Evangelos E. and Faloutsos, Christos},
    journal=IEEE_J_SP, 
    title={Tensor Decomposition for Signal Processing and Machine Learning}, 
    year={2017},
    volume={65},
    number={13},
    pages={3551-3582},
    doi={10.1109/TSP.2017.2690524}}

@article{hendrikx2022block,
  title={Block row Kronecker-structured linear systems with a low-rank tensor solution},
  author={Hendrikx, Stijn and De Lathauwer, Lieven},
  journal=J_FAMS,
  volume={8},
  pages={832883},
  year={2022},
  publisher={Frontiers Media SA}
}

@incollection{nico_numerical,
  author = {Vervliet, N. and De Lathauwer, L.},
  title={Numerical optimization-based algorithms for data fusion},
  booktitle={Data handling in science and technology},
  volume={31},
  pages={81--128},
  year={2019},
  publisher={Elsevier}
}

@book{elem_sym_poly,
    author = {Macdonald, I G},
    isbn = {9780198534891},
    title = {Symmetric Functions and Hall Polynomials},
    publisher = {Oxford University Press},
    year = {1995},
}

@Electronic{tensorlab3.0,
  Title                    = {Tensorlab 3.0},
  Author                   = {Vervliet, N. and Debals, O. and Sorber, L. and {Van Barel}, M. and {De Lathauwer}, L.},
  Year                     = {2016},
  Url                      = {https://www.tensorlab.net}
}

@book{numerical_opt,
  address = {New York, NY},
  author = {Nocedal, {Jorge} and Wright, {Stephen J.}},
  edition = {2. ed.},
  isbn = {978-0-387-30303-1},
  pagetotal = {XXII, 664},
  ppn_gvk = {502988711},
  publisher = {Springer},
  series = {Springer series in operations research and financial engineering},
  title = {Numerical optimization},
  year = 2006}

@article{lasso,
    ISSN = {00359246},
    author = {Robert Tibshirani},
    journal = J_JRSSB,
    number = {1},
    pages = {267--288},
    publisher = {[Royal Statistical Society, Oxford University Press]},
    title = {Regression Shrinkage and Selection via the Lasso},
    urldate = {2025-06-01},
    volume = {58},
    year = {1996}
}

@article{sparse_np_hard1,
author = {Natarajan, B. K.},
title = {Sparse Approximate Solutions to Linear Systems},
journal = J_SIAM_COMP,
volume = {24},
number = {2},
pages = {227-234},
year = {1995},
doi = {10.1137/S0097539792240406}}

@article{sparse_np_hard2,
title = {Sparse approximation is provably hard under coherent dictionaries},
journal = J_JCSS,
volume = {84},
pages = {32-43},
year = {2017},
issn = {0022-0000},
doi = {https://doi.org/10.1016/j.jcss.2016.07.001},
author = {A. Çivril}
}

@article{basis_pursuit,
    author = {Chen, Scott Shaobing and Donoho, David L. and Saunders, Michael A.},
    title = {Atomic Decomposition by Basis Pursuit},
    journal = J_SIAM_SC,
    volume = {20},
    number = {1},
    pages = {33-61},
    year = {1998},
    doi = {10.1137/S1064827596304010}}

@book{omp1950,
    author    = {Alan J. Miller},
    title     = {Subset Selection in Regression},
    edition   = {2nd},
    year      = {2002},
    publisher = {Chapman and Hall},
    address   = {London, U.K.}}

@article{omp2,
    author    = {G. Davis and S. Mallat and M. Avellaneda},
    title     = {Adaptive greedy approximations},
    journal   = J_CA,
    year      = {1997},
    volume    = {13},
    number    = {1},
    pages     = {57--98},
    doi       = {10.1007/BF02678430},
    issn      = {1432-0940}}

@INPROCEEDINGS{omp1,
    author={Pati, Y.C. and Rezaiifar, R. and Krishnaprasad, P.S.},
    booktitle={Proceedings of 27th Asilomar Conference on Signals, Systems and Computers}, 
    title={Orthogonal matching pursuit: recursive function approximation with applications to wavelet decomposition}, 
    year={1993},
    volume={},
    number={},
    pages={40-44 vol.1},
    doi={10.1109/ACSSC.1993.342465}}

@ARTICLE{OMP_Tropp_Gilbert,
  author={Tropp, Joel A. and Gilbert, Anna C.},
  journal=IEEE_J_IT, 
  title={Signal Recovery From Random Measurements Via Orthogonal Matching Pursuit}, 
  year={2007},
  volume={53},
  number={12},
  pages={4655-4666},
  doi={10.1109/TIT.2007.909108}}

@misc{spgl1,
   author = {E. van den Berg and M. P. Friedlander},
   title = {{SPGL1}: A solver for large-scale sparse reconstruction},
   note = {https://friedlander.io/spgl1},
   month = {December},
   year = 2019
}

@article{spgl2,
  Author = {E. van den Berg and M. P. Friedlander},
  Title = {Probing the Pareto frontier for basis pursuit solutions},
  year = {2008},
  journal = J_SIAM_SC,
  volume = {31},
  number = {2},
  pages = {890-912},
  doi = {10.1137/080714488}
}

@book{Cox1998,
  author    = {Cox, David A. and Little, John B. and O'Shea, Donal},
  title     = {Using Algebraic Geometry},
  series    = {Graduate Texts in Mathematics},
  volume    = {185},
  publisher = {Springer},
  address   = {New York, NY, USA},
  year      = {1998}
}

@book{elad,
    author = {Elad, Michael.},
    title = {Sparse and Redundant Representations : From Theory to Applications in Signal and Image Processing },
    year = {2010},
    address = {New York, NY},
    booktitle = {Sparse and Redundant Representations : From Theory to Applications in Signal and Image Processing},
    edition = {1st ed. 2010.},
    isbn = {1-4419-7011-8},
    language = {eng},
    publisher = {Springer New York}}

@book{FoucartRauhut2013,
  author    = {Simon Foucart and Holger Rauhut},
  title     = {A Mathematical Introduction to Compressive Sensing},
  publisher = {Birkh{\"a}user},
  address   = {New York, NY},
  year      = {2013},
  series    = {Applied and Numerical Harmonic Analysis},
  doi       = {10.1007/978-0-8176-4948-7},
  isbn      = {978-0-8176-4947-0},
  edition   = {1},
}

@ARTICLE{donoho,
  author={Donoho, D.L.},
  journal=IEEE_J_IT, 
  title={Compressed sensing}, 
  year={2006},
  volume={52},
  number={4},
  pages={1289-1306},
  doi={10.1109/TIT.2006.871582}}

@ARTICLE{denoising,
  author={Elad, Michael and Aharon, Michal},
  journal=IEEE_J_IP, 
  title={Image Denoising Via Sparse and Redundant Representations Over Learned Dictionaries}, 
  year={2006},
  volume={15},
  number={12},
  pages={3736-3745},
  doi={10.1109/TIP.2006.881969}}

@article{dict_learning,
author = {Mairal, Julien and Bach, Francis and Ponce, Jean and Sapiro, Guillermo},
title = {Online Learning for Matrix Factorization and Sparse Coding},
year = {2010},
issue_date = {3/1/2010},
publisher = {JMLR.org},
volume = {11},
issn = {1532-4435},
journal = J_JMLR,
month = mar,
pages = {19–60},
numpages = {42}
}

@ARTICLE{ksvd,
  author={Aharon, M. and Elad, M. and Bruckstein, A.},
  journal=IEEE_J_SP, 
  title={K-{SVD}: An algorithm for designing overcomplete dictionaries for sparse representation}, 
  year={2006},
  volume={54},
  number={11},
  pages={4311-4322}}

@ARTICLE{face_recog,
  author={Wright, John and Yang, Allen Y. and Ganesh, Arvind and Sastry, S. Shankar and Ma, Yi},
  journal=IEEE_J_PAMI, 
  title={Robust Face Recognition via Sparse Representation}, 
  year={2009},
  volume={31},
  number={2},
  pages={210-227},
  doi={10.1109/TPAMI.2008.79}}

@ARTICLE{super_resolution,
  author={Yang, Jianchao and Wright, John and Huang, Thomas S. and Ma, Yi},
  journal=IEEE_J_IP, 
  title={Image Super-Resolution Via Sparse Representation}, 
  year={2010},
  volume={19},
  number={11},
  pages={2861-2873},
  doi={10.1109/TIP.2010.2050625}}

@article{sindy,
author = {Steven L. Brunton  and Joshua L. Proctor  and J. Nathan Kutz },
title = {Discovering governing equations from data by sparse identification of nonlinear dynamical systems},
journal = J_PNAS,
volume = {113},
number = {15},
pages = {3932-3937},
year = {2016},
doi = {10.1073/pnas.1517384113}}

@book{nla_trefethen,
author = {Trefethen, Lloyd N. and Bau, III, David},
title = {Numerical Linear Algebra},
publisher = {Society for Industrial and Applied Mathematics},
year = {1997},
doi = {10.1137/1.9780898719574},
address = {Philadelphia, PA},
edition   = {},
eprint = {https://epubs.siam.org/doi/pdf/10.1137/1.9780898719574}
}

@STRING{J_SIAM_MAA  = "SIAM J. Matrix Anal. Appl."}

@STRING{J_LAA       = "Linear Algebra Appl."}

@STRING{J_NLAA      = "Numer. Linear Algebra Appl."}

@STRING{J_SIAM_REV  = "SIAM Rev."}

@STRING{J_FAMS      = "Front. Appl. Math. Stat."}

@STRING{J_JRSSB     = "J. R. Stat. Soc. Ser. B"}

@STRING{J_SIAM_COMP = "SIAM J. Comput."}

@STRING{J_JCSS      = "J. Comput. Syst. Sci."}

@STRING{J_SIAM_SC   = "SIAM J. Sci. Comput."}

@STRING{J_CA        = "Constr. Approx."}

@STRING{J_PNAS      = "Proc. Natl. Acad. Sci. U.S.A."}

@STRING{J_JMLR      = "J. Mach. Learn. Res."}

\end{document}